\documentclass[10pt,a4paper,onecolumn]{article} % a4paper

 % space it
\setlength{\parskip}{1.5mm plus4mm minus3mm}

\usepackage[pdftex]{graphicx}
\usepackage{amssymb}
\usepackage{amsmath}
\usepackage{amsthm}
\usepackage{url}
\usepackage[font={small}, labelfont=bf]{caption} 
\usepackage{algorithm}
\usepackage{algpseudocode}
\usepackage{multirow}
\usepackage{graphicx}
\usepackage{booktabs}
\usepackage{bbm}
\usepackage{enumitem}
\usepackage{rotating}
\usepackage{subcaption}

 % space it
\setlength{\parskip}{1.5mm plus4mm minus3mm}

\newcommand{\f}{\operatorname}

\usepackage[round]{natbib}

\newtheorem{theorem}{Theorem}[section]

\newtheorem{proposition}[theorem]{Proposition}
\newtheorem{definition}[theorem]{Definition}

\usepackage{hyperref}

\topmargin 0.0cm
\oddsidemargin 0.2cm
\textwidth 16cm
\textheight 21cm
\footskip 1.0cm

\usepackage{xcolor} 
% Resize equations
\usepackage{environ}         % provides \BODY
\usepackage{etoolbox}        % provides \ifdimcomp
\usepackage{graphicx}        % provides \resizebox
\newlength{\myl}
\let\origequation=\equation
\let\origendequation=\endequation
\RenewEnviron{equation}{
  \settowidth{\myl}{$\BODY$}  % calculate width and save as \myl
  \origequation
  \ifdimcomp{\the\linewidth}{>}{\the\myl}
  {\ensuremath{\BODY}}                             % True
  {\resizebox{\linewidth}{!}{\ensuremath{\BODY}}}  % False
  \origendequation
}

\title{Beyond the Power Law: Estimation, Goodness-of-Fit, and a Semiparametric Extension in Complex Networks}

\author{
Nixon Jerez-Lillo$^1$, Francisco A. Rodrigues$^2$, Paulo H. Ferreira$^3$, and Pedro L. Ramos$^1$\\\\
\normalsize{$^{1}$Departamento de Estadística, Pontificia Universidad Católica de Chile, Santiago, Chile}\\
\normalsize{$^{2}$Institute of Mathematical and Computer Sciences, University of S\~ao Paulo, S\~ao Carlos, Brazil}\\
\normalsize{$^{3}$Department of Statistics, Federal University of Bahia, Salvador, Brazil}
}

\date{}

\begin{document}

\maketitle

\begin{abstract}
Scale-free networks play a fundamental role in the study of complex networks and various applied fields due to their ability to model a wide range of real-world systems. A key characteristic of these networks is their degree distribution, which often follows a power-law distribution, where the probability mass function is proportional to $x^{-\alpha}$, with $\alpha$ typically ranging between $2 < \alpha < 3$. In this paper, we introduce Bayesian inference methods to obtain more accurate estimates than those obtained using traditional methods, which often yield biased estimates, and precise credible intervals. Through a simulation study, we demonstrate that our approach provides nearly unbiased estimates for the scaling parameter, enhancing the reliability of inferences. We also evaluate new goodness-of-fit tests to improve the effectiveness of the Kolmogorov-Smirnov test, commonly used for this purpose. Our findings show that the Watson test offers superior power while maintaining a controlled type I error rate, enabling us to better determine whether data adheres to a power-law distribution. Finally, we propose a piecewise extension of this model to provide greater flexibility, evaluating the estimation and its goodness-of-fit features as well. In the complex networks field, this extension allows us to model the full degree distribution, instead of just focusing on the tail, as is commonly done. We demonstrate the utility of these novel methods through applications to two real-world datasets, showcasing their practical relevance and potential to advance the analysis of power-law behavior.
\\

%Therefore, determining whether the degree distribution of an empirical network follows a power-law is a important concern. 

%, which can result in the rejection of the scale-free hypothesis

\textbf{Keywords}: Scale-free hypothesis, Power-law distribution, Complex networks, Degree distribution
\end{abstract}

\section{Introduction}\label{sec:introduction}

Galileo Galilei's observation is one of the first examples of a power law, which manifests in numerous natural and artificial systems. During his studies of medicine at the University of Pisa around 1580, he discovered a relationship between an animal's mass ($M$) and its size ($L$, length), represented as $M \approx L^3$. He also observed that the scaling of bone area ($S$) follows $S \approx L^2$ \citep{Thurner}. This finding indicates that mass increases at a faster rate than bone area, suggesting that larger animals are less common compared to smaller ones. 

\clearpage

Kepler's third law, Newton's law of gravitation, Coulomb's law of electromagnetism, and Stefan-Boltzmann law are examples of power laws, which can be expressed in the form $x^{-\alpha}$. In the Newton's and Coulomb's laws, $x$ represents the distance between planets or charges, where $\alpha=2$. Thus, the closer two planets or electric charges are, the stronger the force between them. Stokes’ Law, self-organized criticality, and the Curie-von Schweidler Law and Keibler’s Law, which relates metabolic rate to the mass of an organism, also present this relationship. The laws of Kepler, Newton, and Coulomb are deterministic models because they allow us to precisely predict the gravitational or electric force and the position of a planet at any time (see, for example, \citeauthor{moulton1952h}, \citeyear{moulton1952h}; \citeauthor{jackson2007electrodynamics}, \citeyear{jackson2007electrodynamics}; \citeauthor{ramaswamy2010mechanics}, \citeyear{ramaswamy2010mechanics}).

Due to the inherent stochastic elements in data, establishing the presence of power-law scaling presents a significant challenge. Let $X$ be a discrete random variable. $X$ has a power-law distribution if its probability mass function is given by:
\begin{equation}\label{eqn:pmf}
p(x)=\frac{x^{-\alpha}}{\zeta(\alpha,x_{\min})},\quad x \geq x_{\min} \geq 1,
\end{equation}
where $\alpha$ is the scaling exponent, $x_{\min}$ is the minimum value (lower bound) at which the power-law applies, and $\zeta(\alpha,x_{\min})$ is the Hurwitz zeta function defined by:
\begin{equation*}
\zeta(\alpha,x_{\min})= \sum_{k=0}^{+\infty} \left(k+ x_{\min}\right)^{-\alpha}.
\end{equation*}

In this context, a pivotal question arises concerning whether a given dataset adheres to a power-law distribution. This inquiry holds fundamental significance, as power laws can arise from diverse underlying mechanisms. For instance, power laws are linked to the fractal organization and can emerge through preferential attachment, self-organized criticality, and energy and cost minimization in information transmission \citep{Thurner}. Thus, identifying a power law in a system is crucial because it provides valuable insights into the underlying mechanisms governing the system's behavior. Furthermore, power laws are closely linked to scale invariance and self-similarity across different scales, providing a framework for prediction, modeling, and decision-making. Power laws also serve as indicators of critical points and phase transitions \citep{Stanley1971phase, Stauffer}, enabling the identification of thresholds and emergent phenomena. The recognition of power laws enhances our understanding of complex systems, with far-reaching implications in scientific, social, and economic domains~\citep{newman2005power, clauset2009power}. Lastly, \cite{ramos2021power, ramos2024power} introduced an extension of this model that incorporates change points to obtain a flexible piecewise model.  

Identifying power-law distributions is crucial when analyzing complex networks~\citep{newman2018networks}. Networks where the number of connections (called degree) follows a power-law distribution are known as scale-free networks~\citep{costa2007characterization, barabasi2009scale, barabasi2016network}. Graphically, power-law behavior is suggested by a roughly linear decline of points on a plot of the degree distribution when using a logarithmic scale on both axes, in contrast to exponential behavior. Let $ p(x) $ be the degree distribution of a network. If the histogram forms a straight line on $\log$-$\log$ scales, then $\log p(x) = -\alpha \log x + c$, where $\alpha$ is a constant that typically falls in the range $2 < \alpha < 3$ (scale-free hypothesis). Taking the exponential of both sides, this is equivalent to equation \eqref{eqn:pmf} with $c^{-1} = \log \zeta(\alpha, x_{\min})$.

Across most classes of networks, it is common to find evidence that the degree distribution of real-world networks follows a power-law distribution \citep{barabasi2016network}, such as the World Wide Web \citep{albert1999diameter}, biological networks \citep{prvzulj2007biological, khanin2006scale}, software systems \citep{wen2009software}, and citation distributions \citep{redner1998popular}. Identifying power-law behavior provides significant theoretical insights into its generative mechanism \citep{virkar2014power}. From a theoretical perspective, \cite{barabasi1999emergence} proposed a model capable of replicating the growth of numerous real networks. In this model, nodes are added one by one to the system with a probability proportional to the number of links existing nodes possess at that time. The networks generated by this model present a power-law distribution in the number of connections \citep{mata2020complex}.

The estimation of the power-law distribution from a dataset is typically conducted in two steps. The first step focuses on estimating the scaling parameter, which is generally determined using maximum likelihood approach (see \citeauthor{lehmann1999elements}, \citeyear{lehmann1999elements} for details), assuming that the lower bound of the distribution is fixed. The maximum likelihood estimator (MLE) for $\alpha$, denoted by $\hat\alpha_{\f{MLE}}$, is obtained by directly maximizing the logarithm of the likelihood function given by:
\begin{equation}\label{eqn:loglik}
\mathcal{L}(\alpha)=-n\log\zeta(\alpha,x_{\min})-\alpha\sum_{i=1}^{n}\log(x_i).
\end{equation}

The Fisher information element, which is commonly used to determine the standard error of the MLE, is obtained by calculating the negative second derivative of equation \eqref{eqn:loglik} and taking its expectation. It is given by:
\begin{equation}\label{eqn:fisher}
\mathcal{I}(\alpha)=n\left(\frac{\zeta''(\alpha,x_{\min})}{\zeta(\alpha,x_{\min})}-\left(\frac{\zeta'(\alpha,x_{\min})}{\zeta(\alpha,x_{\min})}\right)^2\right).
\end{equation}

Despite the MLE of $\alpha$ not having a closed-form expression, it is possible to obtain an approximate expression by using a continuity correction and neglecting quantities of order $x_{\min}^{-2}$. Thus, \cite{clauset2009power} define the approximate maximum likelihood estimator (AMLE) as:
\begin{equation}\label{eqn:amle}
\hat\alpha_{\f{AMLE}}= 1+n\left[\sum_{i=1}^{n}\log\left(\frac{x_i}{x_{\min}-\text{0,5}}\right)\right]^{-1},
\end{equation}
which is in fact similar to the MLE for the continuous power-law model.

The second step is focused on the estimation of $x_{\min}$. \cite{clauset2009power} discuss two approaches to this task. The first one was proposed by \cite{handcock2004likelihood} and consists in maximizing the Bayesian information criterion with respect to $x_{\min}$. This likelihood function is constructed by representing the data below $x_{\min}$ by separate probabilities, and above $x_{\min}$ by a power-law distribution. However, this method presents difficulties because it tends to underestimate it, leading to a biased estimation of the scaling parameter. On the other hand, \cite{clauset2007resolution} use the Kolmogorov-Smirnov (KS) statistic. This approach aims to find the value of $x_{\min}$ that makes the distribution of the data and the power-law model fit as closely as possible. In other words, it chooses $x_{\min}$ so that the cumulative distribution function (CDF) of the observed data, restricted to values greater than or equal to $x_{\min}$, is as similar as possible to the CDF of the power-law model fitted to the same range of data. \cite{clauset2009power} justifies that this method gives excellent results and performs better than the first one.

However, the literature indicates that MLEs are sensitive to outliers, which is critical for long-tailed distributions like the power-law distribution, as these outliers can disproportionately affect the estimation \citep{cordeiro2014introduction}. Moreover, empirical datasets often cover only a limited range of observations, making it challenging to clearly identify power-law behavior. Due to the importance of identifying this distribution in stochastic systems, a recent debate has suggested that power laws are not as common as has been reported \citep{broido2019scale}. Therefore, determining if a dataset follows a power law and inferring its parameters is a challenge for traditional statistical methods, making new approaches necessary.

In this paper, we develop Bayesian estimators for the scaling parameter of the power-law model using the Jeffreys prior. Our approach yields a proper posterior distribution and an efficient estimator in terms of bias and mean squared error. Additionally, \cite{kass1996selection} demonstrate that this prior leads to a posterior distribution with desirable properties: (i) invariance under one-to-one transformations, (ii) consistency under marginalization, and (iii) good coverage probabilities. Through a simulation study, we show that our proposed estimators provide nearly unbiased estimates for this parameter, even with small sample sizes, and improve upon the results obtained by the maximum likelihood method. Since the power-law model applies only above a minimum value, we discuss a piecewise extension that incorporates change points to enhance flexibility and model the entire dataset. This approach has been implemented for continuous power-law distributions by \cite{ramos2021power, ramos2024power}, achieving excellent results.

In addition, we address the problem of classifying networks as scale-free or, more broadly, determining whether data follow a power-law distribution. \cite{clauset2009power} proposed an approach based on the KS statistic to evaluate goodness-of-fit, where the approximate distributions of the test statistic are generated through a semiparametric bootstrap method. However, our findings show that this approach fails to maintain the nominal significance level, leading to incorrect discrimination of power-law samples for small and large sample sizes. To address this issue, we perform other well-known goodness-of-fit tests, such as the Cramér–von Mises, Watson, and Anderson–Darling tests. Furthermore, we assess the performance of these tests against various alternative hypotheses using Monte Carlo simulations. The simulation results reveal that the Watson test is the most powerful, consistently controlling the type I error at the specified nominal level, making it the preferred choice.

This paper is organized as follows: Section \ref{sec:bayes} introduces Bayesian estimators based on maximum a posteriori estimation with a Jeffreys prior and demonstrates that the resulting posterior distribution is proper. In Section \ref{sec:gof}, we review various goodness-of-fit tests and propose a novel joint test to correct the type I error of each individual test. Section \ref{sec:pwpl} introduces the piecewise extension of the discrete power-law model and discusses the estimation procedure. Section \ref{sec:simulation} presents four simulation studies designed to assess the procedures discussed, we say, the estimation and goodness-of-fit procedure for the standard and piecewise discrete power-law model. Section \ref{sec:application} analyzes two datasets: the corpus of networks examined by \cite{broido2019scale}, to assess the proposed methodology for the power-law model, and the word frequency distribution in the novel Moby Dick, to evaluate its piecewise extension. Our findings indicate that our proposal retains the power-law distribution for many networks where the KS test rejects it. Furthermore, we found that incorporating change-points into the power-law model results in a more flexible framework, effectively capturing entire datasets without truncating information. We conclude this study with a summary of the key findings and a discussion of their implications in Section \ref{sec:conclusion}.

%\clearpage
\section{Bayesian estimators}\label{sec:bayes}

Inferential procedures using Bayesian methods have become popular in recent decades. Under the Bayesian approach, parameters of a statistical model are treated as random variables, with a prior distribution representing beliefs about them. Central to Bayesian inference is Bayes' theorem, which updates these beliefs with observed data to form the posterior distribution. From the posterior distribution, various inferences can be made, including point estimation (e.g., the mean value of the posterior distribution, the posterior median that divides the distribution into two equal parts, and the posterior mode indicating the most likely value of the parameter), credible intervals (e.g., highest posterior density or equal-tailed), and hypothesis testing (see, for example,
\citeauthor{box2011bayesian}, \citeyear{box2011bayesian};
\citeauthor{gelman2013bayesian}, \citeyear{gelman2013bayesian}).

The choice of a prior distribution can be either informative or non-informative, depending on the available prior knowledge and the desired influence on the posterior. Non-informative priors, also known as diffuse or vague priors, are used when the goal is to let the data primarily drive the inferences rather than prior beliefs. These priors are designed to exert minimal influence on the posterior distribution. A common non-informative prior is the Jeffreys prior, which is widely used due to its invariance property under one-to-one transformations of parameters~\citep{jeffreys1946invariant}. This prior is proportional to the square root of the Fisher information element in equation \eqref{eqn:fisher}:
\begin{align*}
\pi_1\left(\alpha\right)\propto \sqrt{\frac{\zeta''(\alpha,x_{\min})}{\zeta(\alpha,x_{\min})}-\left(\frac{\zeta'(\alpha,x_{\min})}{\zeta(\alpha,x_{\min})}\right)^2}=\sqrt{\phi^{(2)}(\alpha,x_{\min})},
\end{align*}
where $\phi^{(i)}(\alpha,x_{\min})$ denotes the $i$-th derivative of $\log\zeta(\alpha,x_{\min})$.

The posterior distribution for $\alpha$ is given by:
\begin{align}\label{eqn:posterior1}
\pi_1(\alpha\mid\boldsymbol{x}) &= \frac{1}{d_1(\boldsymbol{x})}\frac{\sqrt{\phi^{(2)}(\alpha,x_{\min})}}{\left[\zeta(\alpha,x_{\min})\right]^{n}}\prod_{i=1}^{n}x_{i}^{-\alpha},
\end{align}
where:
\begin{align*}
d_1(\boldsymbol{x})=\int_{1}^{+\infty}\frac{\sqrt{\phi^{(2)}(\alpha,x_{\min})}}{\left[\zeta(\alpha,x_{\min})\right]^{n}}\prod_{i=1}^{n}x_{i}^{-\alpha}d\alpha.
\end{align*}

Due to the fact that the Jeffreys prior is an improper function, it is necessary to ensure that the posterior distribution is proper. Before moving forward, we need to introduce the following definitions and propositions. Let $\overline{\mathbb{R}} = \mathbb{R} \cup \{-\infty, \infty\}$ represent the extended real number line with the usual order $(\geq)$. Let $\mathbb{R}^{+}$ denote the positive real numbers, and $\mathbb{R}_0^{+}$ denote the non-negative real numbers, including zero. Similarly, we define $\overline{\mathbb{R}}^{+}$ and $\overline{\mathbb{R}}_0^{+}$ analogously.

\begin{definition}
Let $a \in \overline{\mathbb{R}}_0^{+}$ and $b \in \overline{\mathbb{R}}_0^{+}$. We say that $a \lesssim b$ if there exists $M \in \mathbb{R}^{+}$ such that $a \leq M \cdot b$. If $a \lesssim b$ and $b \lesssim a$, then we say that $a \propto b$.
\end{definition}

\begin{definition}
Let $g: \mathcal{U} \rightarrow \overline{\mathbb{R}}_0^{+}$ and $h: \mathcal{U} \rightarrow \overline{\mathbb{R}}_0^{+}$, where $\mathcal{U} \subset \mathbb{R}$. We say that $g(x) \lesssim h(x)$ if there exists $M \in \mathbb{R}^{+}$ such that $g(x) \leq M \cdot h(x)$, for every $x \in \mathcal{U}$. Moreover, if $g(x) \lesssim h(x)$ and $h(x) \lesssim g(x)$, then we denote this by $g(x) \propto h(x)$.
\end{definition}

\begin{proposition}\label{prop0}
Let $g:(a, b) \rightarrow \mathbb{R}^{+}$ and $h:(a, b) \rightarrow \mathbb{R}^{+}$ be continuous functions on $(a, b) \subset \mathbb{R}$, where $a, b \in \overline{\mathbb{R}}$. Then, $g(x) \propto g(x)$ if and only if $\lim\limits_{x \to a} {g(x)}/{h(x)} < +\infty$ and $\lim\limits_{x \to b} {g(x)}/{h(x)} < +\infty$.
\end{proposition}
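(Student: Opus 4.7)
The plan is to prove the equivalence by converting endpoint control of the ratio $g/h$ into a uniform two-sided bound on $(a,b)$, using continuity of $g/h$ on any compact sub-interval to handle the interior. I will read the statement as $g(x) \propto h(x)$ (the repeated $g$ on the right is plainly a typo) and, consistent with what the forward direction will force, interpret ``finite limit'' as producing a value strictly inside $(0,+\infty)$; this is exactly what makes the hypothesis equivalent to the two-sided relation $\propto$ rather than only to the one-sided domination $\lesssim$, so no external hypothesis on $h/g$ needs to be bolted on.

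The forward direction $(\Rightarrow)$ is immediate. By definition of $\propto$ there exist $M_1, M_2 \in \mathbb{R}^{+}$ with $g \leq M_1 h$ and $h \leq M_2 g$ on $(a,b)$; dividing by $h>0$ gives $M_2^{-1} \leq g(x)/h(x) \leq M_1$ uniformly on $(a,b)$, so any limit along $x \to a$ or $x \to b$ lies in the closed positive interval $[M_2^{-1}, M_1]$ and is in particular strictly less than $+\infty$ (and strictly greater than $0$). This already shows that the only reading of the hypothesis compatible with the equivalence is that the finite limits themselves lie in $(0,+\infty)$.

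For the converse $(\Leftarrow)$, I would denote the endpoint limits by $L_a, L_b \in (0,+\infty)$, fix $\varepsilon > 0$ smaller than both, and use the definition of limit to produce points $a < c < d < b$ such that $L_a - \varepsilon \leq g(x)/h(x) \leq L_a + \varepsilon$ on $(a,c]$ and the analogous two-sided bound with $L_b$ holds on $[d,b)$. On the compact middle sub-interval $[c,d]$, continuity of $g$ and $h$ together with positivity of $h$ make $g/h$ continuous and positive, so it attains a strictly positive minimum $m_0$ and a finite maximum $M_0$ there. Setting $m := \min\{L_a-\varepsilon,L_b-\varepsilon,m_0\} > 0$ and $M := \max\{L_a+\varepsilon,L_b+\varepsilon,M_0\} < +\infty$ patches the three sub-interval bounds into a global estimate $m\,h(x) \leq g(x) \leq M\,h(x)$ on all of $(a,b)$, which is exactly $g \propto h$.

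The main obstacle is conceptual rather than computational: the hypothesis as written only explicitly controls $g/h$ from above at the endpoints, so without interpreting the finite limit as a strictly positive real number one only recovers $g \lesssim h$ and the lower bound patching fails near $a$ or $b$. The forward implication itself forces the correct reading—genuine $\propto$ pairs produce endpoint limits bounded both above and below by positive constants—and once this interpretation is adopted, the remainder of the argument is a routine endpoint-plus-compactness patching that needs no further ideas.
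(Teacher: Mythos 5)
First, there is nothing in the paper to compare your argument against: the authors give no proof of this proposition and simply defer to \cite{ramos2023power}, where the argument is the same endpoint-limit-plus-compact-interval patching you describe. Your compactness step (continuity and positivity of $g/h$ on $[c,d]$ giving $0 < m_0 \leq g(x)/h(x) \leq M_0 < +\infty$ there) and the patching of the three sub-intervals are correct as written.

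The substantive issue is how you repaired the garbled statement. You read the conclusion as the two-sided relation $g \propto h$ and accordingly strengthen the hypothesis to ``both endpoint limits lie in $(0,+\infty)$'', arguing that the forward implication forces this reading. The paper's own use of the proposition contradicts that choice: in Appendix~\ref{appendix:proof1} the authors compute $\lim_{\alpha\to+\infty}\phi^{(2)}(\alpha,x_{\min})/\bigl(\alpha^2(\alpha-1)^{-2}\bigr)=0$ --- a limit equal to zero, which your strengthened hypothesis excludes and for which $\propto$ genuinely fails --- and they then invoke Proposition~\ref{prop0} to conclude only the one-sided bounds $\zeta^{-1}\leq C_1\cdot(\cdots)$ and $\phi^{(2)}\leq C_2\cdot(\cdots)$. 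The intended statement is therefore ``$g(x) \lesssim h(x)$ if and only if $\limsup_{x\to a}g/h<+\infty$ and $\limsup_{x\to b}g/h<+\infty$''. Under that reading your converse simplifies (only the upper-bound half of the patching is needed, and zero limits are allowed), and your forward direction must be phrased with $\limsup$ rather than $\lim$, since $g\lesssim h$ (or even $g\propto h$) bounds the ratio but does not make the limit exist --- your hedge that ``any limit lies in $[M_2^{-1},M_1]$'' tacitly concedes this. So the machinery is right, but as written you have proved a neighbouring proposition whose hypothesis fails for one of the two pairs of functions the paper actually applies this result to.
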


\begin{proof}
A detailed proof of this proposition can be found in \cite{ramos2023power}.
\end{proof}

\begin{proposition}\label{prop1} For all $r\geq 0$ and $n\geq 1$, we have that:
\begin{equation*}
\int_{1}^{+\infty}\alpha^r\frac{\sqrt{\phi^{(2)}(\alpha,x_{\min})}}{\left[\zeta(\alpha,x_{\min})\right]^{n}}\prod_{i=1}^{n}x_{i}^{-\alpha} d\alpha<+\infty.
\end{equation*}
\end{proposition}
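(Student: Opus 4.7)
The plan is to split the integral into a bounded part $\int_{1}^{M}d\alpha$ and a tail $\int_{M}^{+\infty}d\alpha$ for a suitably large $M>1$, and to prove convergence of each piece by combining the asymptotic behaviour of the Hurwitz zeta function with Proposition~\ref{prop0} to promote pointwise equivalences into the global $\lesssim$ bounds that are needed.

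A useful preliminary observation is that $\prod_{i=1}^{n}x_i^{-\alpha}/[\zeta(\alpha,x_{\min})]^n$ is exactly the likelihood $\prod_{i}p(x_i;\alpha)$, which (since $\zeta(\alpha,x_{\min})\geq x_{\min}^{-\alpha}$ and $x_i\geq x_{\min}$) is bounded above by $1$ for every $\alpha\in(1,+\infty)$. For the tail, this reduces matters to showing $\int_{M}^{+\infty}\alpha^{r}\sqrt{\phi^{(2)}(\alpha,x_{\min})}\,d\alpha<+\infty$. I would expand
\[
\zeta(\alpha,x_{\min})=x_{\min}^{-\alpha}\bigl[1+\rho^{\alpha}+O(\rho_{2}^{\alpha})\bigr],\qquad \rho=\tfrac{x_{\min}}{x_{\min}+1},\ \rho_{2}=\tfrac{x_{\min}}{x_{\min}+2},
\]
together with the analogous expansions of $\zeta'$ and $\zeta''$. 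Because $\phi^{(2)}=(\log\zeta)''$, the leading constants $(\log x_{\min})^{2}$ in $\zeta''/\zeta$ and $(\zeta'/\zeta)^{2}$ cancel (a fact also transparent from the identity $\phi^{(2)}(\alpha,x_{\min})=\mathrm{Var}_{p}(\log X)$ for $X$ distributed according to \eqref{eqn:pmf}), and the next-order term yields $\phi^{(2)}(\alpha,x_{\min})\sim[\log(1+1/x_{\min})]^{2}\rho^{\alpha}$ as $\alpha\to+\infty$. Consequently $\sqrt{\phi^{(2)}(\alpha,x_{\min})}\lesssim\rho^{\alpha/2}$ on $[M,+\infty)$, and the integrand is dominated by $\alpha^{r}\rho^{\alpha/2}$, whose integral is plainly finite.

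The compact region $[1,M]$ only requires a local analysis at $\alpha=1$. If $x_{\min}>1$, then $\zeta$, $\zeta'$ and $\zeta''$ are analytic with $\zeta$ bounded away from zero on $[1,M]$, so the integrand is continuous on this closed interval and integrability is immediate. If $x_{\min}=1$, the Riemann zeta has its simple pole at $\alpha=1$, and the classical expansion $\zeta(\alpha,1)=1/(\alpha-1)+\gamma+O(\alpha-1)$ gives $\log\zeta(\alpha,1)\sim-\log(\alpha-1)$, hence $\phi^{(2)}(\alpha,1)\sim(\alpha-1)^{-2}$ and $[\zeta(\alpha,1)]^{-n}\sim(\alpha-1)^{n}$; since $\prod_{i}x_{i}^{-\alpha}$ is bounded on $[1,M]$, the integrand is $\lesssim(\alpha-1)^{n-1}$ near $\alpha=1$, which is integrable for every $n\geq 1$.

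The main obstacle is the exponential-decay estimate of $\sqrt{\phi^{(2)}}$ for large $\alpha$: the leading $(\log x_{\min})^{2}$ terms in $\zeta''/\zeta$ and $(\zeta'/\zeta)^{2}$ exactly cancel, so one has to track subleading terms of the Hurwitz zeta series carefully to extract the correct rate $\rho^{\alpha}$. Once this is done, Proposition~\ref{prop0} upgrades the pointwise asymptotics into global $\lesssim$ bounds, and the integrability assertion follows by direct comparison with integrable majorants on $[1,M]$ and $[M,+\infty)$.
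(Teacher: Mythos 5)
Your treatment of the compact interval contains a factual error: the Hurwitz zeta function $\zeta(\alpha,x_{\min})=\sum_{k\geq 0}(k+x_{\min})^{-\alpha}$ has a simple pole at $\alpha=1$ for \emph{every} $x_{\min}\geq 1$ (the series at $\alpha=1$ is a shifted harmonic series), not only for $x_{\min}=1$. So the claim that for $x_{\min}>1$ ``$\zeta$, $\zeta'$ and $\zeta''$ are analytic with $\zeta$ bounded away from zero on $[1,M]$'' is false, and the ``immediate'' continuity argument for that case does not stand. Fortunately the branch you wrote for $x_{\min}=1$ is exactly the right argument for all $x_{\min}$: the pole gives $\log\zeta(\alpha,x_{\min})\sim-\log(\alpha-1)$, hence $\phi^{(2)}\sim(\alpha-1)^{-2}$ and $[\zeta]^{-n}\sim(\alpha-1)^{n}$ (up to an analytic non-vanishing factor), so the integrand is $\lesssim(\alpha-1)^{n-1}$ near $\alpha=1$ and the case split should simply be deleted. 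This is also how the paper handles the endpoint $\alpha=1$, via the factorization $\zeta(\alpha,x_{\min})=(\alpha-1)^{-1}g(\alpha)$ with $g$ analytic and non-vanishing, combined with Proposition~\ref{prop0}.

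For the tail your route genuinely differs from the paper's, and it is correct. You bound the likelihood factor $\prod_i x_i^{-\alpha}/[\zeta(\alpha,x_{\min})]^n$ by $1$ and extract integrability from the exponential decay $\phi^{(2)}(\alpha,x_{\min})\sim[\log(1+1/x_{\min})]^2\rho^{\alpha}$ with $\rho=x_{\min}/(x_{\min}+1)$; the cancellation of the leading $(\log x_{\min})^2$ terms is real (and your variance identity $\phi^{(2)}=\mathrm{Var}(\log X)$ is a clean way to see both positivity and the cancellation), and $\alpha^{r}\rho^{\alpha/2}$ is an integrable majorant. The paper instead settles for the much cruder bound $\phi^{(2)}\leq C\,\alpha^2(\alpha-1)^{-2}$ (so $\sqrt{\phi^{(2)}}$ is merely bounded at infinity) and obtains decay from the likelihood via $\prod_i(x_i/x_{\min})^{-\alpha}$, reducing the whole integral to a Gamma integral in one stroke. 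The paper's route is shorter but implicitly needs $\sum_i\log(x_i/x_{\min})>0$, i.e.\ at least one observation strictly above $x_{\min}$; your sharper asymptotics for $\phi^{(2)}$ make the tail converge even without that, at the cost of having to track the subleading term of the Hurwitz series. With the $x_{\min}>1$ branch corrected as above, your argument is a valid alternative proof.
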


\begin{proof}
The detailed proof is provided in Appendix \ref{appendix:proof1}.
\end{proof}

Using this result, we have proven that the posterior is proper, ensuring the reliability of the inferences drawn from it. Additionally, we have demonstrated that all posterior moments are finite. This finiteness supports the use of numerical techniques, such as Markov chain Monte Carlo methods \citep{gamerman2006markov}, which depend on the proper normalization of the posterior distribution. However, it is important to note that these techniques will not be utilized in this work.

Although most mathematical and statistical softwares do not have the derivatives of the Hurwitz zeta function implemented, we can consider its relationship with the zeta function to calculate $\phi^{(2)}(\alpha,x_{\min})$ for inference purposes, given the posterior distribution in \eqref{eqn:posterior1}. The first two derivatives of the Hurwitz zeta function can be expressed as (see Appendix \ref{appendix:derivatives_HZfunction}):
\begin{align*}
    \zeta'(\alpha,x_{\min}) &= \zeta'(\alpha) + \sum_{k=1}^{x_{\min}-1} \frac{\log(k)}{k^\alpha},\\
    \zeta''(\alpha,x_{\min}) &= \zeta''(\alpha) - \sum_{k=1}^{x_{\min}-1} \frac{\left[\log(k)\right]^2}{k^\alpha},
\end{align*}
where $\zeta'(\alpha)$ and $\zeta''(\alpha)$ denote, respectively, the first and second derivatives of the Riemann zeta function.

In this analysis, we will use the maximum a posteriori (MAP) estimate as our point estimation method for the scaling parameter. To obtain this estimate, we need to maximize the equation \eqref{eqn:posterior1}, or equivalently, its logarithm. The MAP estimator for $\alpha$, denoted by $\hat{\alpha}_{\text{MAP}}$, is found as the root of the following equation: 
\begin{align}\label{eqn:map1}
\dfrac{d\log\left(\phi^{(2)}(\alpha,x_{\min})\right)}{d\alpha} \Bigg|_{\alpha = \hat\alpha_{\f{MAP}}}-2n\,\dfrac{\zeta^\prime(\hat\alpha_{\f{MAP}},x_{\min})}{\zeta(\hat\alpha_{\f{MAP}},x_{\min})}-2\sum_{i=1}^{n}\log(x_i) = 0.
\end{align}

Due that solving directly the equation above is not an easy task, specially because it is difficult to obtain the second derivative of $\log\left(\phi^{(2)}(\alpha,x_{\min})\right)$, we can use the Jeffreys prior derived from the continuous power-law model \citep{clauset2009power}, which is given by:
\begin{align*}
    \pi_2(\alpha) \propto \dfrac{1}{\alpha - 1},
\end{align*}
which eliminates the dependence on the lower bound in their formulation. This function provides a good approximation of the Jeffreys prior in the discrete model (see Figure \ref{fig1}), thus avoiding computational difficulties and ensuring a more tractable estimation process.

\begin{figure}[!ht]
    \centering
    \includegraphics[scale=0.6]{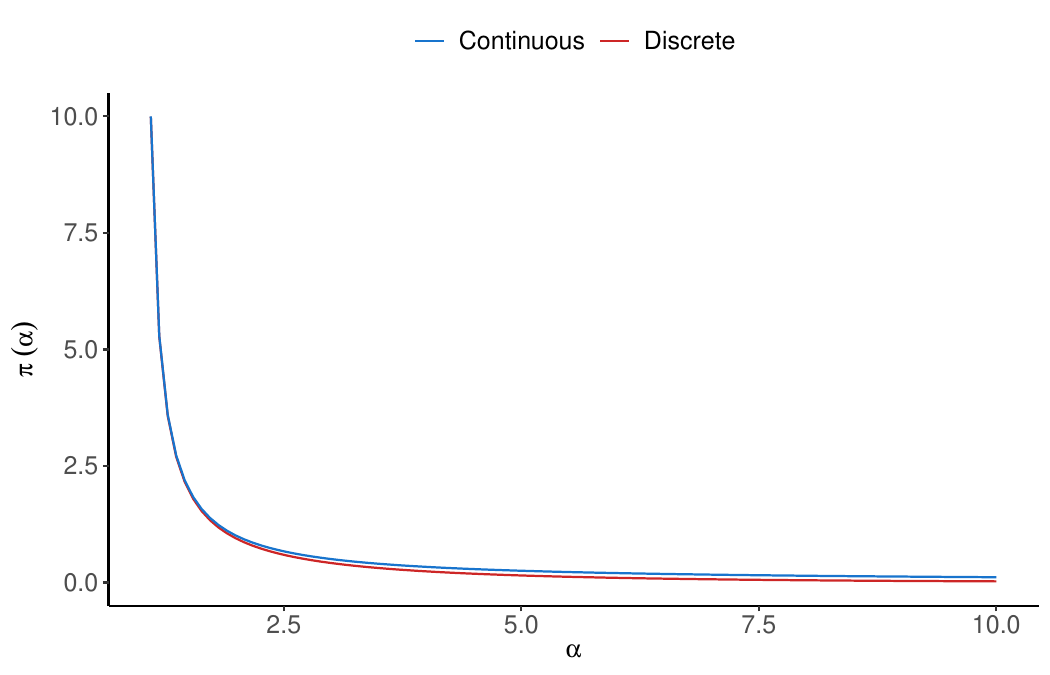}
    \caption{Jeffreys priors for the discrete and continuous power-law model taking $x_{\min}=1$.}\label{fig1}
\end{figure}
	
The posterior distribution, derived from the Jeffreys prior of the continuous model, is given by:
\begin{equation}\label{eqn:posterior2}
\begin{aligned}
\pi_2(\alpha\mid\boldsymbol{x}) &= \frac{1}{d_2(\boldsymbol{x})}\frac{(\alpha-1)^{-1}}{\left[\zeta(\alpha,x_{\min})\right]^{n}}\prod_{i=1}^{n}x_{i}^{-\alpha},
\end{aligned}
\end{equation}
where: 
\begin{equation*}
d_2(\boldsymbol{x})=\int_{1}^{+\infty}\frac{(\alpha-1)^{-1}}{\left[\zeta(\alpha,x_{\min})\right]^{n}}\prod_{i=1}^{n}x_{i}^{-\alpha}d\alpha.
\end{equation*}

Again, we will need to ensure that this new posterior distribution is proper.

\begin{proposition}\label{prop2} For all $r\geq 0$ and $n\geq 1$, we have that:
\begin{equation*}
\int_{1}^{+\infty}\alpha^r\frac{(\alpha-1)^{-1}}{\left[\zeta(\alpha,x_{\min})\right]^{n}}\prod_{i=1}^{n}x_{i}^{-\alpha} d\alpha<+\infty.
\end{equation*}
\end{proposition}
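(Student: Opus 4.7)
The strategy mirrors Proposition \ref{prop1}: the integrand is continuous and strictly positive on $(1,+\infty)$, so by Proposition \ref{prop0} finiteness of the integral reduces to controlling its asymptotic behaviour at the two endpoints $\alpha\to 1^{+}$ and $\alpha\to+\infty$. The plan is to split $\int_{1}^{+\infty}=\int_{1}^{2}+\int_{2}^{+\infty}$ and, using the order relation $\propto$, compare the integrand with integrable reference functions on each piece.

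Near $\alpha\to 1^{+}$, I would use the Laurent expansion $\zeta(\alpha,x_{\min})=(\alpha-1)^{-1}-\psi(x_{\min})+O(\alpha-1)$, where $\psi$ denotes the digamma function and the simple pole at $\alpha=1$ has residue $1$ for every $x_{\min}\geq 1$. This yields $\zeta(\alpha,x_{\min})\propto (\alpha-1)^{-1}$, so
\[
\frac{(\alpha-1)^{-1}}{[\zeta(\alpha,x_{\min})]^{n}}\;\propto\;(\alpha-1)^{\,n-1},
\]
while $\alpha^{r}\prod_{i}x_{i}^{-\alpha}$ is continuous and finite at $\alpha=1$. Hence the integrand is bounded (when $n=1$) or vanishes (when $n\geq 2$) as $\alpha\to 1^{+}$, and the piece on $(1,2]$ is finite.

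For the tail $\alpha\to+\infty$, I would use the elementary bound $\zeta(\alpha,x_{\min})\geq x_{\min}^{-\alpha}$, keeping only the first term of the defining series. This gives
\[
\frac{\prod_{i=1}^{n}x_{i}^{-\alpha}}{[\zeta(\alpha,x_{\min})]^{n}}\;\leq\;\prod_{i=1}^{n}\left(\frac{x_{\min}}{x_{i}}\right)^{\!\alpha}=e^{-\alpha S},\qquad S=\sum_{i=1}^{n}\log(x_{i}/x_{\min})\geq 0.
\]
Provided $S>0$, i.e.\ the sample is not entirely concentrated at $x_{\min}$ (the only case in which $\alpha$ is identifiable at all), this is genuine exponential decay and absorbs the polynomial factor $\alpha^{r}(\alpha-1)^{-1}$, so the tail piece on $[2,+\infty)$ is also finite.

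The main obstacle, and the essential difference from Proposition \ref{prop1}, is that $(\alpha-1)^{-1}$ does not itself decay at infinity, whereas $\sqrt{\phi^{(2)}(\alpha,x_{\min})}$ decays there exponentially fast. All the exponential damping in the tail must therefore come from the data-dependent ratio $\prod_{i}x_{i}^{-\alpha}/[\zeta(\alpha,x_{\min})]^{n}$, so the delicate step is the lower bound on $\zeta(\alpha,x_{\min})$ used above; near $\alpha=1$ the argument is essentially identical to that of Proposition \ref{prop1}.
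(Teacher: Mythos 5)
Your argument is correct and rests on the same two facts as the paper's proof: the simple pole of $\zeta(\alpha,x_{\min})$ at $\alpha=1$ (so that $(\alpha-1)^{-1}[\zeta(\alpha,x_{\min})]^{-n}\propto(\alpha-1)^{n-1}$ stays bounded near the left endpoint) and the dominance of the first term $x_{\min}^{-\alpha}$ of the series at infinity (so that $\prod_i x_i^{-\alpha}/[\zeta(\alpha,x_{\min})]^{n}\leq e^{-\alpha S}$). The only difference is packaging: you split the integral at $\alpha=2$ and treat each endpoint locally, whereas the paper reuses the single global comparison $\zeta^{-1}(\alpha,x_{\min})\leq C_1\,\alpha^{-1}(\alpha-1)x_{\min}^{\alpha}$ already established in the proof of Proposition \ref{prop1} via Proposition \ref{prop0}, absorbs the extra factor $(\alpha-1)^{n-1}\leq\alpha^{n-1}$, and closes with the explicit Gamma-integral bound $C_1^n\,\Gamma(r)\,S^{-r}$. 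Your tail estimate is if anything slightly more elementary, since the lower bound $\zeta(\alpha,x_{\min})\geq x_{\min}^{-\alpha}$ needs no constant. One point in your favour that is worth keeping: you make explicit the condition $S=\sum_i\log(x_i/x_{\min})>0$, i.e.\ that not all observations equal $x_{\min}$; this hypothesis is genuinely needed (when $S=0$ the integrand behaves like $\alpha^{r-1}$ at infinity and the integral diverges), and the paper's own closed-form bound $\Gamma(r)S^{-r}$ silently presupposes it. A small caveat: Proposition \ref{prop0} is a criterion for the two-sided relation $g\propto h$ between functions, not a statement about integrals, so your opening sentence should be read as invoking it only to justify the local comparisons, which is how you in fact use it; also, for $r=0$ the paper's final expression $\Gamma(r)S^{-r}$ is formally infinite and one should stop at the bound over $[1,+\infty)$ rather than extending to $[0,+\infty)$ --- your split-domain version avoids this wrinkle automatically.
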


\begin{proof}
The proof of this proposition is provided in detail in Appendix \ref{appendix:proof2}.
\end{proof}

From the previous result, we have demonstrated not only that the posterior is finite but also that all its moments are finite, specifically, its mean and variance. Consequently, the posterior is well-behaved and can be used for further inference. 

The MAP estimate based on Jeffreys prior from continuous power-law model is obtained as the root of the following equation:
\begin{align}\label{eqn:map2}
\frac{1}{\hat{\alpha}_{\text{MAP}} - 1} +
2n \frac{\zeta^\prime(\hat{\alpha}_{\text{MAP}}, x_{\min})}{\zeta(\hat{\alpha}_{\text{MAP}}, x_{\min})} +
2 \sum_{i=1}^{n} \log(x_i) = 0,
\end{align}
which is derived by taking the first derivative of the logarithm of the posterior distribution, as expressed in equation \eqref{eqn:posterior2}, and setting it equal to zero.

\section{Testing power-law distribution}\label{sec:gof}

After estimating the parameters of a statistical model, the next step is to evaluate how well it fits the data. Indeed, when we propose a parametric statistical model, we assume that the underlying mechanism generating the data, described by a probability distribution, can be recovered by evaluating a specific value within the parametric space of these families of probability distributions, which is suggested by a particular estimation method \citep{mccullagh2002statistical}. However, in practice, this hypothesis needs to be tested, or at least checked, to determine if there is evidence to reject it.

A goodness-of-fit (GOF) test is a statistical test used to determine how well a model fits a set of observed data. These tests are based on measuring the ``distance" between the distribution of the empirical data and the hypothesized model. Let $p_0(x)$ denote the probability mass function from which the data were generated. We are interested in testing the following hypothesis:
\begin{align*}
    H_0: p_0(x) \in \mathcal{P} \quad \text{ versus } \quad H_1: p_0(x) \notin \mathcal{P},
\end{align*}
where $\mathcal{P}$ refers to the collection of all probability mass functions defined as \eqref{eqn:pmf}.

Let $F_n(x)$ denote the empirical distribution function of the random sample and $F(\cdot\,; \hat{\boldsymbol\theta})$ denote the CDF of the fitted model in $\mathcal{P}$. In this work, we have focused on four widely used tests: the Kolmogorov-Smirnov, the Cram\'er-von Mises, the Watson, and the Anderson-Darling tests (see \citeauthor{d2017goodness}, \citeyear{d2017goodness}; \citeauthor{zamanzade2019edf}, \citeyear{zamanzade2019edf} for a detailed review). 

\subsection{Kolmogorov-Smirnov test}

The KS test is one of the most popular GOF statistical tests used to measure how well a random sample agrees with a specific distribution. Its statistic is defined as:
\begin{align*}
    K = \max |F_n(x)-F(x; \hat{\boldsymbol{\theta}})|.
\end{align*}

The distribution of $K$ under $H_0$ is intractable because its theoretical properties depend on various factors, such as the distribution being tested, the method of estimation, the values of parameter estimates, and the sample size.  In practice, the parametric bootstrap is used to approximate its distribution. This approach involves repeatedly resampling from an estimated model and recalculating the test statistic, thus providing an empirical distribution to approximate the critical values for the KS test.

On the other hand, \cite{clauset2009power} describe a semiparametric bootstrap method to obtain synthetic data. Let $n_{\text{tail}}$ be the number of observations greater than $x_{\min}$. The method is as follows: with probability $n_{\text{tail}}/n$, each observation is drawn from a power-law distribution with scaling parameter $\hat{\alpha}$ for sample values greater than $x_{\min}$. Otherwise, with probability $1 - n_{\text{tail}}/n$, we select a value uniformly at random from the elements of the observed dataset less than $x_{\min}$. After that, we must fit each synthetic dataset individually to its own power-law model and calculate the KS statistic for each one relative to its own model. Finally, we define the $p$-value as the fraction of times that the resulting statistic is larger than the value for the empirical data. If the $p$-value is less than the chosen significance level, the null hypothesis that the data follow the fitted model is rejected; otherwise, it is not rejected.

%Additionally, the authors recommend generating 2,500 synthetic samples to achieve accuracy to 2 decimal places of the $p$-value, based on an analysis of the expected worst-case performance of the test.

\subsection{Alternative goodness-of-fit tests}

Some alternatives to the KS statistic include the Cramér-von Mises \((W^2)\), Watson \((U^2)\), and Anderson-Darling \((A^2)\) statistics. Each of them differ in how they evaluate the discrepancies between the empirical CDF and the hypothesized one. For instance, the Cramér-von Mises statistic evaluates the average squared differences across the entire distribution, the Watson statistic adjusts these differences to eliminate effects of location, and the Anderson-Darling statistic places greater emphasis on discrepancies in the tails of the distribution. The corresponding statistics are defined as follows:
\begin{align*}
    W^{2} & =\frac{1}{12 n}+\sum_{i=1}^n\left(\hat{U}_{(i)}-\frac{2 i-1}{2 n}\right)^2, \\
    U^{2} & =W^{2}-n\left(\overline{\hat{U}}-1 / 2\right)^2, \\
    A^{2} & =-n-\frac{1}{n} \sum_{i=1}^n(2 i-1)\left[\log \hat{U}_{(i)}+\log \left(1-\hat{U}_{(n+1-i)}\right)\right],
\end{align*}
where $\hat{U}_i=F(X_i ; \hat{\boldsymbol{\theta}})$ denotes the fitted CDF, $\hat{U}_{(1)} \leq \ldots \leq \hat{U}_{(n)}$ denote their ordered statistics, $\overline{\hat{U}}=\sum_{i=1}^n \hat{U}_i / n$, and $X_i, i=$ $1, \ldots, n$, are the given data points. If $F(\cdot\,; \hat{\theta})$ provides a close approximation to the true distribution $F_0$, then $\hat{U}_i$ will be asymptotically uniformly distributed on the interval $[0,1]$. These formulas indicate that all test statistics are designed to detect departures of $\hat{U}_i$ from uniformity on $[0,1]$. Therefore, for each statistic, we reject the null hypothesis if the statistic value is larger than the $100(1-\gamma) \%$ quantile of the corresponding null distribution.

\section{Discrete piecewise power-law model}\label{sec:pwpl}

As mentioned in the Introduction section, the power-law model is useful for describing the tails of many empirical probability distributions. However, it often fails to accurately capture the entire distribution. One way to address this issue is to extend the model to a piecewise model. A piecewise distribution refers to a model that incorporates change points, which divide the data into intervals and allow for the application of different power-law models to each interval, producing a more flexible model than the original.

Consider a partition over $\mathcal{R} = [\tau_{(0)}, +\infty)$ yielding $\mathcal{R} = \mathcal{R}_1 \cup \ldots \cup \,\mathcal{R}_{k + 1}$, with $\mathcal{R}_{j} = [\tau_{(j-1)}, \tau_{(j)})$ for $j = 1, \dots, k + 1$, where $\tau_{(0)} \geq 1$ is the sample minimum, $\tau_{(1)}< \ldots< \tau_{(k)}$ are change points, and $\tau_{(k+1)} = +\infty$. The probability mass function of the discrete piecewise power-law model is given by:
\begin{align}\label{eqn:pmf_pw}
p(x)&=\sum_{j=1}^{k + 1} \left[\frac{x^{-\alpha_j}}{\zeta(\alpha_j,\tau_{(j-1)})} \cdot C_{j-1} \right] \mathbbm{1}_{\mathcal{R}_j}(x),
\end{align}
where $C_0=1$, and $C_{i}=\displaystyle\prod_{h=1}^{i} \frac{\zeta(\alpha_h,\tau_{(h)})}{\zeta(\alpha_{h},\tau_{(h-1)})},$ with $i=1,\ldots,k$, normalization constants. 

\begin{proposition}\label{prop3}
Let $X$ be a random variable with the probability mass function in \eqref{eqn:pmf_pw}. The variable $X \mid X \geq \tau_{(k)}$ follows a power-law distribution with a lower bound of $\tau_{(k)}$ and a scaling parameter of $\alpha_{k+1}$.
\end{proposition}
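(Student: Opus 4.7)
The plan is to compute the conditional pmf of $X$ given $X \geq \tau_{(k)}$ directly from the definition in \eqref{eqn:pmf_pw} and recognize it as the standard discrete power-law pmf in \eqref{eqn:pmf} with the claimed parameters.

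First I would restrict attention to the last interval. Since $\tau_{(k+1)} = +\infty$ and the indicator $\mathbbm{1}_{\mathcal{R}_{k+1}}(x)$ selects exactly those $x \geq \tau_{(k)}$, only the $j = k+1$ term in the sum defining $p(x)$ contributes whenever $x \geq \tau_{(k)}$. Hence for such $x$,
\begin{equation*}
p(x) = \frac{x^{-\alpha_{k+1}}}{\zeta(\alpha_{k+1},\tau_{(k)})}\, C_k.
\end{equation*}

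Next I would evaluate the normalizing probability $P(X \geq \tau_{(k)})$. Summing the previous display over $x \geq \tau_{(k)}$ and pulling the constant $C_k$ outside yields
\begin{equation*}
P(X \geq \tau_{(k)}) = \frac{C_k}{\zeta(\alpha_{k+1},\tau_{(k)})} \sum_{x \geq \tau_{(k)}} x^{-\alpha_{k+1}} = C_k,
\end{equation*}
since by the definition of the Hurwitz zeta function $\sum_{x \geq \tau_{(k)}} x^{-\alpha_{k+1}} = \zeta(\alpha_{k+1},\tau_{(k)})$. Dividing the joint probability by this marginal gives
\begin{equation*}
P(X = x \mid X \geq \tau_{(k)}) = \frac{x^{-\alpha_{k+1}}}{\zeta(\alpha_{k+1},\tau_{(k)})}, \qquad x \geq \tau_{(k)},
\end{equation*}
which is precisely the power-law pmf in \eqref{eqn:pmf} with lower bound $\tau_{(k)}$ and scaling exponent $\alpha_{k+1}$.

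There is no real obstacle here; the result is essentially a bookkeeping exercise in which the chained normalization constants $C_k$ are designed precisely so that the conditional tail distribution remains a genuine power-law. The only point to be careful about is making sure that the geometric-series telescoping inside $C_k$ plays no role once we condition past the last change point, so I would just emphasize that the factor $C_k$ appears identically in numerator (joint pmf) and denominator (marginal tail probability) and thus cancels, leaving the clean power-law form.
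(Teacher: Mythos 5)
Your proof is correct and follows essentially the same route as the paper's: both reduce to the single $j=k+1$ term for $x\ge\tau_{(k)}$, show that $\mathbb{P}(X\ge\tau_{(k)})=C_k$ (you by summing the pmf against the Hurwitz zeta definition, the paper by evaluating its stated survival-function formula at $\tau_{(k)}$), and cancel $C_k$ to recover the power-law pmf. The only quibble is cosmetic: $C_k$ is a telescoping product of Hurwitz zeta ratios, not a geometric series.
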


\begin{proof}
The full proof is included in Appendix \ref{appendix:proof3}.
\end{proof}

Estimating the parameters of this model presents certain challenges because the presence of change points causes the model to fail to meet regularity conditions \citep{casella2021statistical}. To address this issue, we propose a brute-force search procedure inspired by \cite{xu2024pwexp}. First, we assume that the change points are located within the observations due to their discrete nature. Then, for a specific set of $k$ fixed change points, we define the log-likelihood function as follows:
\begin{align}\label{eqn:loglik_pw}
\mathcal{L}(\boldsymbol{\theta})&=\sum_{j=1}^{k + 1}\left[n_j\Big(\log C_{j-1} - \log \zeta(\alpha_j,\tau_{(j)})\Big) -\alpha_j\sum_{i:x_i\in \mathcal{R}_{j}} \log x_i\right],
\end{align}
where $\boldsymbol{\theta} = (\alpha_1, \ldots, \alpha_{k+1})$ contains all the scaling parameters for each segment induced by the partition. We proceed by estimating all the possible models with $k$ change points by maximizing the log-likelihood function in \eqref{eqn:loglik_pw}. Finally, through an exhaustive search, we select the model that achieves the highest likelihood among all the possible configurations, according to the maximum likelihood principle.

Finally, the optimal number of change points is determined by selecting the model that offers the best performance under any conventional criteria. In this step, we can define as many piecewise models as the number of change points we want to evaluate. After estimating their locations and scaling parameters, we can use the Akaike Information Criterion (AIC), defined as $\text{AIC} = 2p - 2\mathcal{L}(\hat{\boldsymbol{\theta}})$, where $p$ is the number of estimated parameters, or the Bayesian Information Criterion (BIC), given by $\text{BIC} = p\log(n) - 2\mathcal{L}(\hat{\boldsymbol{\theta}})$, to adhere to the principle of parsimony \citep{zhang1992distributional}.

To further quantify the uncertainty in the estimates of the scaling parameter for each partition, we can employ two well-known methods: the asymptotic distribution of the MLE and bootstrap resampling. The first method involves using the Hessian matrix, $ \mathbf{H} $, of the log-likelihood function, evaluated at its maximum, which, as shown in \cite{lehmann1999elements}, can be used to estimate the covariance matrix of the parameters, as it is a consistent estimator of the Fisher information. Under regularity conditions, the asymptotic confidence interval for $ \alpha_j $ at the level $ 1 - \gamma $ is given by:
\begin{align*}
\hat{\alpha}_j \pm z_{\gamma/2} \cdot \sqrt{-[\mathbf{H}^{-1}]_{jj}},\quad j=1,\ldots, k+1,
\end{align*}
where $ z_{\gamma/2} $ is the critical value from the standard normal distribution.

Alternatively, we may use a second method based on the bootstrap technique \citep{efron1982jackknife}, which consists of generating $B$ resampled datasets by sampling with replacement from the original dataset. For each bootstrap sample, we fit the model and obtain an estimate $ \hat{\alpha}^{(i)} $, $i=1, \ldots, B$. The distribution of these estimates, known as the bootstrap distribution, allows us to estimate the bias of our estimators and construct confidence intervals. By using bootstrap to assess the bias of the estimator, we can define the bootstrap bias-corrected estimator (see \citeauthor{ferrari1998bootstrap}, \citeyear{ferrari1998bootstrap} for details) as follows:
\begin{align*}
    \hat\alpha_j^{\text{(boot)}} = 2\hat{\alpha}_j - \frac{1}{B} \sum_{i=1}^{B} \hat{\alpha}_j^{(i)}, \quad j=1,\ldots, k+1,
\end{align*}
where $ \hat{\alpha}_j $ is the estimate from the original data. On the other hand, the confidence interval is derived from the empirical quantiles of the bootstrap distribution, we say, $\left[ \hat{\alpha}_{\gamma/2}^{*}, \hat{\alpha}_{1-\gamma/2}^{*} \right],$ where $ \hat{\alpha}_{\gamma/2}^{*} $ and $ \hat{\alpha}_{1-\gamma/2}^{*} $ are the $ \gamma/2 $-th and $ 1 - \gamma/2 $-th percentiles of the bootstrap distribution, respectively.

\clearpage
\section{Simulation study}\label{sec:simulation}

In this section, we present a comprehensive simulation study designed to evaluate the effectiveness and performance of the different estimation and GOF procedures introduced in this work. To achieve this, we have divided the analysis into three separate studies.

\subsection{Study 1}\label{sec:sim_1}

In this initial step, we focus on assessing the performance of the estimators for the scaling parameter of a power-law distribution: the MLE (derived from the maximization of the equation \eqref{eqn:loglik}), the AMLE (defined in equation \eqref{eqn:amle}), the MAP estimator with Jeffreys prior obtained for the discrete power-law model (derived from equation \eqref{eqn:map1}), and the MAP estimator with Jeffreys prior obtained for the continuous power-law model (derived from equation \eqref{eqn:map2}). These estimators were evaluated under four scenarios, each with distinct parameter values. Case 1 involved $x_{\min} = 4$ and $\alpha = 4$. Case 2 involved $x_{\min} = 6$ and $\alpha = 4$. Case 3 involved $x_{\min} = 4$ and $\alpha = 6$. Case 4 involved $x_{\min} = 6$ and $\alpha = 6$. The simulation study employed 5,000 synthetic samples for each case, with sizes $n=10, 20, \ldots, 100$. These samples were generated using the \texttt{rpldis} function from the \texttt{poweRlaw} package \citep{powerRlaw} in the \texttt{R} software.

We consider three key measures to evaluate the performance of the estimation procedures: bias, mean squared error (MSE), and 95\% coverage probability (CP) (see \citeauthor{morris2019using}, \citeyear{morris2019using} for a mathematical definition of these measures). Bias measures the difference between the expected value of an estimator and the true parameter value, providing insight into the estimator's accuracy. MSE, calculated as the average of the squared differences between the estimated and true values, reflects both the variance and bias of the estimator. Finally, CP assesses the proportion of times the true parameter value lies within the specified confidence intervals for the MLEs and AMLEs (obtained using their asymptotic distribution) or within the credible intervals of the posterior distributions in \eqref{eqn:posterior1} and \eqref{eqn:posterior2} (obtained using the 2.5\% and 97.5\% percentiles of each distribution, respectively). Using these criteria, we are searching for an estimator that yields a bias closer to zero, smaller MSEs, and CPs near 0.95.

Figure \ref{fig2} shows the biases, MSEs, and CPs from the estimates of $\alpha$ obtained using the four estimation methods described earlier. The horizontal lines in the figures correspond to the expected values for each metric: zero for bias and MSEs, and 0.95 for CP.

As shown in these figures, the most precise estimators in terms of bias across all cases are the Bayesian estimators, particularly the one derived from the posterior distribution in \eqref{eqn:posterior1}, which is nearly unbiased even with samples as small as size 10. In contrast, the frequentist estimators perform worse, with the AMLE showing a notable bias that increased as the sample size increased. Regarding MSE, the MLE exhibits the poorest performance, followed by the Bayesian estimators, while the AMLE performs the best, indicating that its large bias is compensated for by its small variability. Finally, the CPs are close to 0.95 in all cases except for the AMLE, which is affected by its large bias and reduced variability. 

Overall, these results suggest that the MAP estimator derived from equation \eqref{eqn:map1} is preferable for estimating the scale parameter of the power-law distribution due to its balanced performance across all evaluation metrics. Its near-unbiased nature, even for small sample sizes, competitive MSE, and reliable CPs make it a robust choice compared to the frequentist alternatives.

\begin{figure}[H]
    \centering
    \includegraphics[scale=0.6]{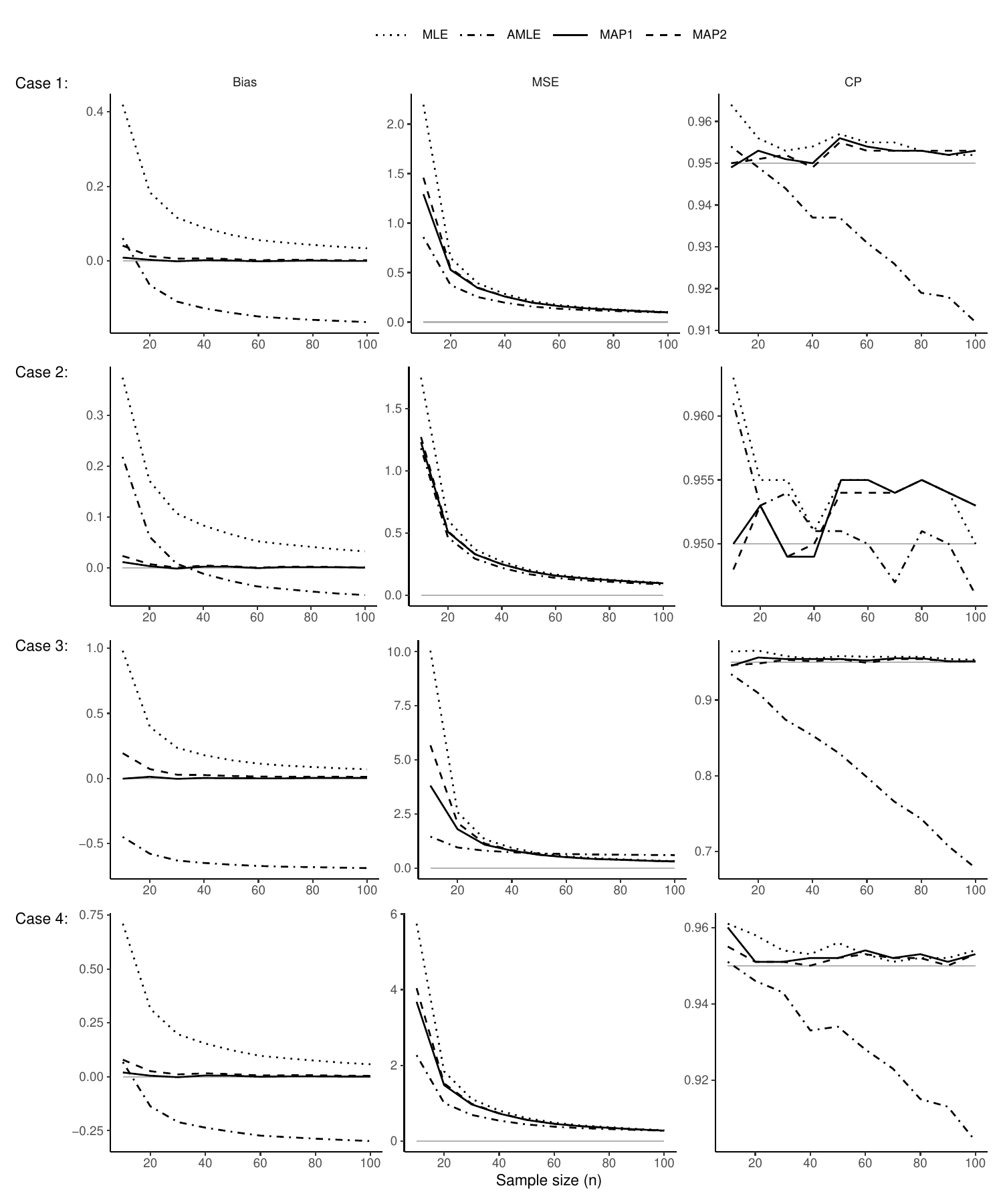}
    \caption{Performance of the estimators in terms of bias (left), MSE (middle), and CP (right) across different cases using random samples of sizes $n=10, 20, \ldots, 100$ generated from a power-law distribution.}\label{fig2}
\end{figure}

\subsection{Study 2}\label{sec:sim_2}

In this section, we examine the performance of the introduced GOF test in Section \ref{sec:gof} through Monte Carlo simulations. Two key aspects of any statistical test are its rejection rate when the null hypothesis $H_0$ is true (known as the size) and its rejection rate when $H_0$ is false (known as the power). An effective test is one that has a size close to the nominal significance level and demonstrates high power.

All statistics were calculated using the \texttt{dgof} package \citep{dgof} in the \texttt{R} software, except $K$ statistics, which was computed using the code available in \cite{clauset_powerlaws}, which implements the semiparametric bootstrap method described in the previous section, with a minor modification to include the scaling parameter as a function argument. 

\paragraph{Size of the test.}

We estimate the empirical sizes of all tests at the 5\% and 10\% significance levels. Consequently, we do not reject the power-law distribution for samples where the \(p\)-values exceed 0.05 and 0.1, respectively. In this initial stage, we used 1,000 pseudo-random samples of sizes $n$ = 100, 200, 300, 400, and 500. Figures \ref{fig3} and \ref{fig10} in Appendix \ref{appendix:study2} present the empirical sizes for all GOF tests in both significance levels.

As can be seen, the $W^2$, $U^2$, and $A^2$ test statistics maintain the nominal significance level and correctly discriminate between samples for small and moderate sample sizes. In contrast, the KS test rejects around 40\% of power-law samples across cases. This behavior violates a well-known theorem, which states that under the null hypothesis, the $p$-values of a test should follow a uniform distribution, meaning that the rejection rate of these samples should match the significance level.
  
\paragraph{Power of the test.} 

In this part of the study, we will conduct a simulation study to assess the power of the same GOF tests discussed previously under various alternative distributions. Specifically, we analyze the rejection rates of all tests against the four alternative hypotheses at a 5\% and 10\% significance levels. Case 1 involved a discrete Exponential distribution with $x_{\min} = 2$ and $\lambda = 0.4$. Case 2 involved a discrete Exponential distribution as well with $x_{\min} = 2$ and $\lambda = 0.6$. Case 3 involved a discrete Poisson distribution with $x_{\min} = 4$ and $\mu = 6$. Case 4 involved a discrete Poisson distribution as well with $x_{\min} = 4$ and $\mu = 4$. These alternative distributions are introduced in detail in Appendix D of \cite{clauset2009power}. Simulations were carried out again with 1,000 pseudo-random samples of sizes $n = 100, 200, 300, 400$, and 500. Figure \ref{fig4} and \ref{fig11} in Appendix \ref{appendix:study2} shows the empirical power for all GOF tests in both cases.

We observed that the KS test is the most powerful among the tests considered, exhibiting a high rejection rate. This high power allows it to detect deviations from the null hypothesis more effectively than the other tests. On the other hand, the Watson test is the second most powerful, while the Cramér-von Mises and Anderson-Darling tests are the least powerful, showing similar performance.

\begin{figure}[!ht]
    \centering
    \includegraphics[scale = 0.4]{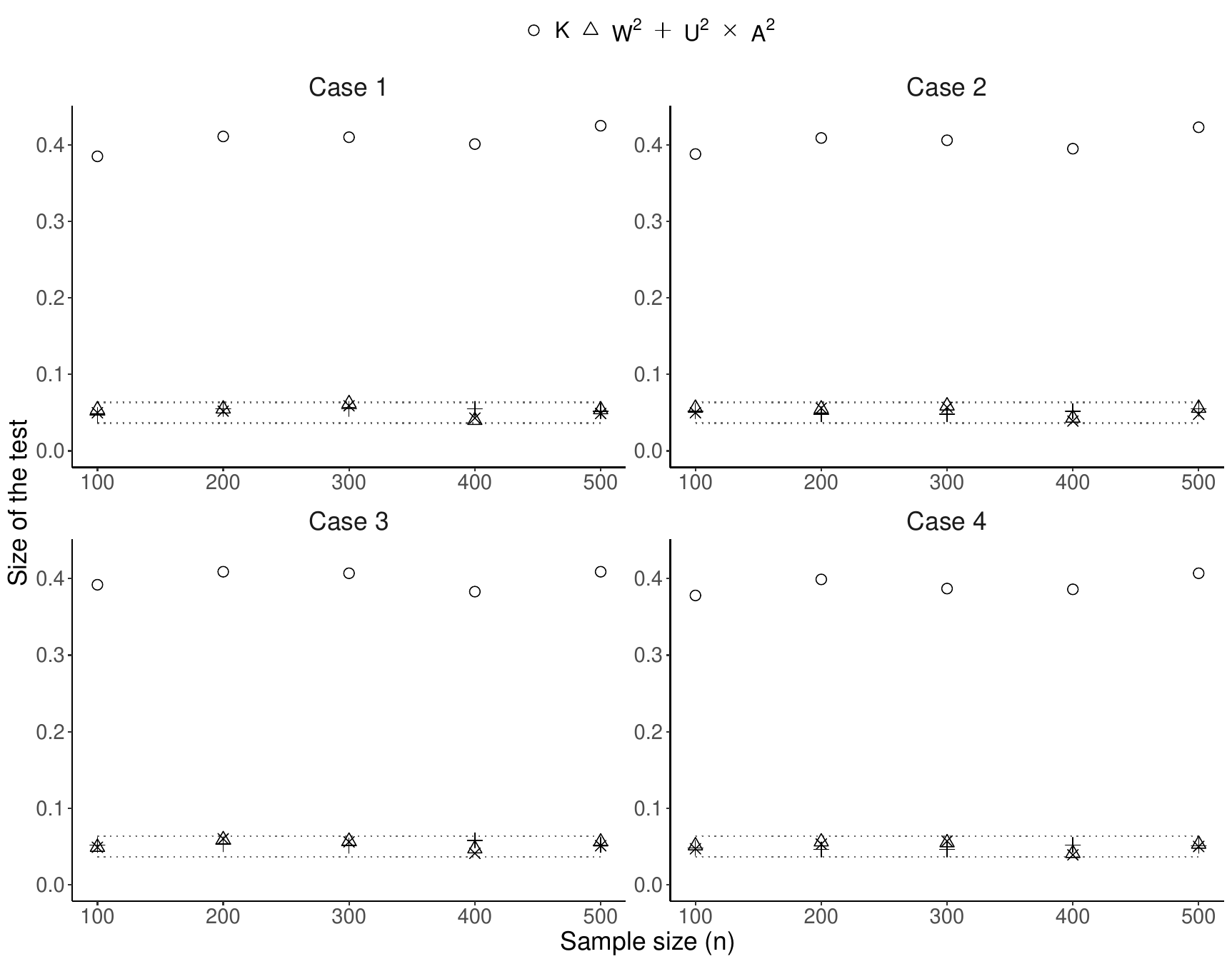}
    \caption{Empirical sizes of the GOF tests based on the $K$, $W^2$, $U^2$, and $A^2$ statistics at a 5\% significance level. The dashed line represents the limits of $0.05 \pm 1.96 \sqrt{0.05(1-0.05)/\text{1,000}}$, derived from the normal approximation.}\label{fig3}
\end{figure}

\begin{figure}[!ht]
    \centering
    \includegraphics[scale=0.4]{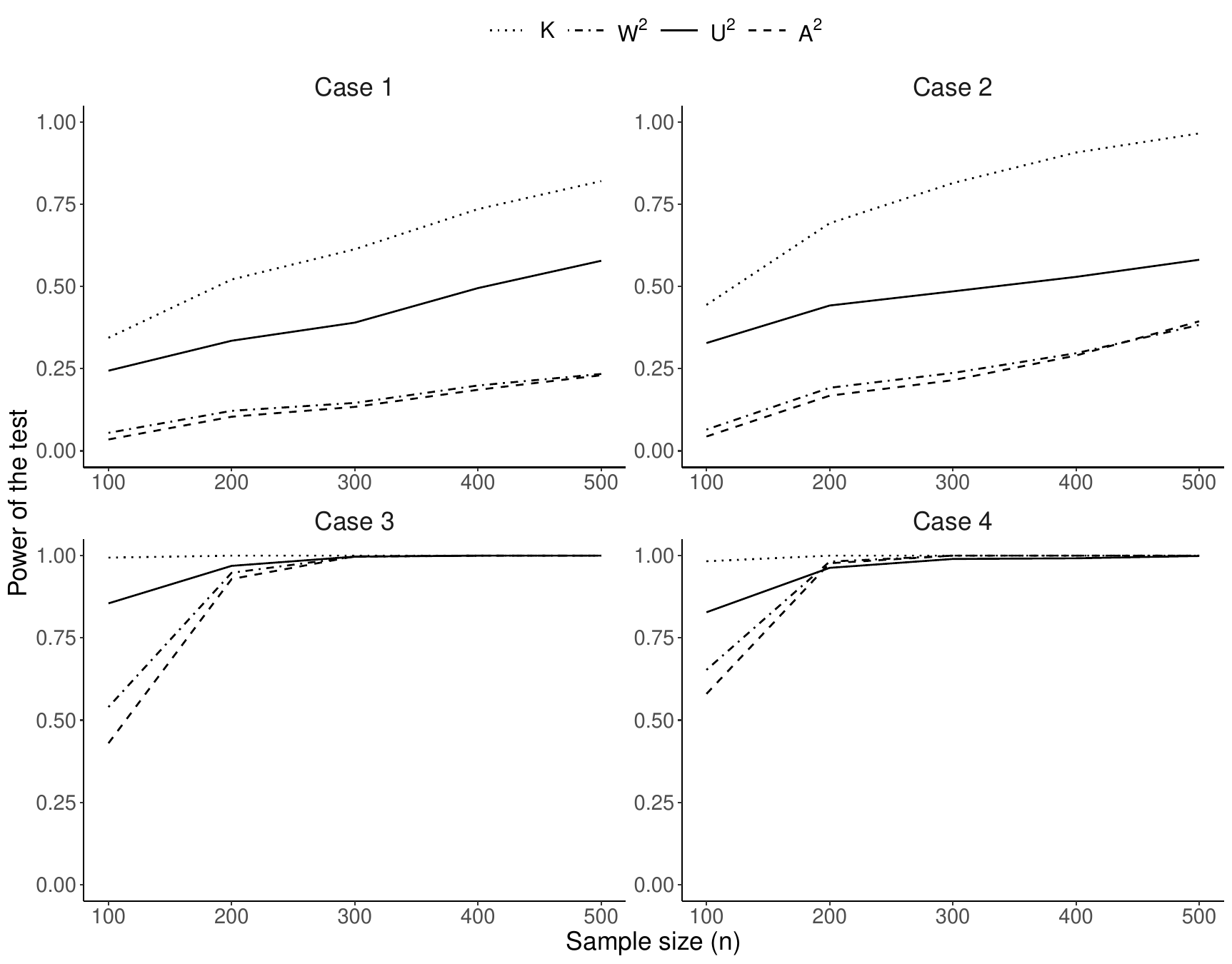}
    \caption{Empirical power of GOF tests based on the $K$, $W^2$, $U^2$, and $A^2$ statistics at a 5\% significance level.}\label{fig4}
\end{figure}

\paragraph{Conclusion.} The KS test is the most powerful test, as shown by its performance across different scenarios and sample sizes, but it cannot control the type I error rate, meaning it rejects the null hypothesis too often, even when it is true, which makes it unreliable. This overreaction makes the test seem more effective than it actually is. In contrast, the Watson test performs well in terms of power while maintaining a controlled type I error rate, so it is highly recommended for use in practical situations.

\clearpage
\subsection{Study 3}\label{sec:sim_3}

In our third simulation study, we focus on evaluating the performance of the estimation methods for the discrete piecewise power-law model. Specifically, we aim to assess the precision of two estimators of the scaling parameter for each partition: the MLE and the bootstrap bias-corrected MLE. For this task, we used the same metrics as in Study 1: bias and MSE. Additionally, we are interested in generating confidence intervals for these parameters by using both the asymptotic distribution of the MLE and bootstrap simulations. The 95\% CP was employed to compare the performance of both methods. Initially, we employed 100 bootstrap samples due to the computational cost. Furthermore, we evaluated the accuracy of the change-point estimation, calculated as the percentage of times the estimated change points coincided with the specified ones. Due to the computational cost of evaluating all possible change-point combinations, we constrained the search to the 90th percentile of each sample. This decision was motivated by the expectation that change points are more likely to occur where the data are concentrated, rather than in the tail of extreme events, which have a lower probability of occurrence.

Each of these aspects was evaluated under four distinct scenarios: Case 1 involved $\tau_{(1)} = 5$ and $\boldsymbol{\theta} = (1.2, 3.0)$; Case 2 involved $\tau_{(1)} = 5$ and $\boldsymbol{\theta} = (1.5, 3.5)$; Case 3 involved $(\tau_{(1)}, \tau_{(2)}) = (3, 5)$ and $\boldsymbol{\theta} = (1.5, 1.8, 3.0)$; and Case 4 involved $(\tau_{(1)}, \tau_{(2)}) = (3, 5)$ and $\boldsymbol{\theta} = (1.5, 1.2, 3.5)$. The simulation study was based on 1,000 replicates for each case, with sample sizes of $n=100, 250, 500, 750,$ and 1,000. These samples were generated using the inverse transformation method dealing on Appendix \ref{sec:samples}.

\begin{table}[!ht]
\centering
\caption{Bias, MSE, and CP for the scaling parameter estimators computed via bootstrap, along with the accuracy (Accu.) of the change-point estimates for the discrete piecewise power-law model.}
\label{tab1}
\scalebox{0.85}{
\rotatebox{0}{
\begin{tabular}{ccccccccccccccccccc}
  \hline
  & & \multicolumn{3} {c} { $\hat\alpha_1^{(\text{boot})}$ } & & $\hat\tau_{(1)}$ & & \multicolumn{3} {c} { $\hat\alpha_2^{(\text{boot})}$ } & & $\hat\tau_{(2)}$ & & \multicolumn{3} {c} { $\hat\alpha_3^{(\text{boot})}$ } \\
  \cline{3-5} \cline{7-7} \cline{9-11} \cline{13-13} \cline{15-17}
  & $n$ & Bias & MSE & CP & & Accu. & & Bias & MSE & CP & & Accu. & & Bias & MSE & CP \\ 
  \hline
   \multirow{1}{*}{Case 1:} & 100 & 0.00 & 0.00 & 0.94 &  & 100\% &  & 0.00 & 0.13 & 0.94 &  & - &  & - & - & - \\ 
   &  250 & -0.00 & 0.00 & 0.94 &  & 100\% &  & 0.01 & 0.05 & 0.94 & & - &  & - & - & - \\ 
   &  500 & 0.00 & 0.00 & 0.95 &  & 100\% &  & 0.00 & 0.02 & 0.94 & & - & & - & - & - \\ 
   &  750 & -0.00 & 0.00 & 0.94 &  & 100\% &  & -0.00 & 0.02 & 0.94 & & - & & - & - & - \\ 
   &  1,000 & 0.00 & 0.00 & 0.94 &  & 100\% &  & -0.00 & 0.01 & 0.94 & & - & & - & - & - \\ 
   \hline
   \multirow{1}{*}{Case 2:} & 100 & -0.00 & 0.00 & 0.93 &  & 98\% &  & 0.01 & 0.29 & 0.90 & & - & & - & - & - \\ 
   &  250 & -0.00 & 0.00 & 0.95 &  & 100\% &  & -0.02 & 0.11 & 0.93 & & - & & - & - & - \\ 
   &  500 & -0.00 & 0.00 & 0.94 &  & 100\% &  & -0.00 & 0.05 & 0.94 & & - & & - & - & - \\ 
   &  750 & -0.00 & 0.00 & 0.94 &  & 100\% &  & 0.01 & 0.03 & 0.94 & & - & & - & - & - \\ 
   &  1,000 & 0.00 & 0.00 & 0.94 &  & 100\% &  & 0.00 & 0.03 & 0.94 & & - & & - & - & - \\ 
   \hline
   \multirow{1}{*}{Case 3:} & 100 & -0.00 & 0.00 & 0.94 &  & 94\% &  & -0.02 & 0.04 & 0.90 &  & 60\% &  & 0.11 & 0.36 & 0.86 \\ 
   &  250 & 0.00 & 0.00 & 0.95 &  & 100\% &  & 0.00 & 0.01 & 0.93 &  & 88\% &  & 0.02 & 0.10 & 0.92 \\ 
   &  500 & -0.00 & 0.00 & 0.96 &  & 100\% &  & 0.00 & 0.01 & 0.94 &  & 97\% &  & -0.00 & 0.04 & 0.93 \\ 
   &  750 & -0.00 & 0.00 & 0.94 &  & 100\% &  & 0.00 & 0.00 & 0.95 &  & 99\% &  & 0.00 & 0.03 & 0.94 \\ 
   &  1,000 & -0.00 & 0.00 & 0.94 &  & 100\% &  & -0.00 & 0.00 & 0.94 &  & 100\% &  & 0.00 & 0.02 & 0.94 \\ 
   \hline
   \multirow{1}{*}{Case 4:} & 100 & -0.00 & 0.00 & 0.94 &  & 97\% &  & -0.05 & 0.05 & 0.92 &  & 64\% &  & 0.08 & 0.55 & 0.84 \\ 
   &  250 & -0.00 & 0.00 & 0.93 &  & 100\% &  & 0.01 & 0.02 & 0.93 &  & 87\% &  & 0.04 & 0.17 & 0.92 \\ 
   &  500 & 0.00 & 0.00 & 0.95 &  & 100\% &  & -0.00 & 0.01 & 0.94 &  & 96\% &  & 0.01 & 0.07 & 0.93 \\ 
   &  750 & -0.00 & 0.00 & 0.94 &  & 100\% &  & 0.00 & 0.00 & 0.94 &  & 99\% &  & 0.00 & 0.05 & 0.94 \\ 
   &  1,000 & 0.00 & 0.00 & 0.93 &  & 100\% &  & 0.00 & 0.00 & 0.94 &  & 100\% &  & -0.01 & 0.03 & 0.94 \\ 
   \hline
\end{tabular}
}
}
\end{table}

\clearpage

Table \ref{tab1} presents the results of the estimators via bootstrap and the change-point estimation procedure, while Table \ref{tab6} in Appendix \ref{appendix:study3} presents the results for the traditional MLE and the CP computed using its asymptotic distribution. The results show that the change-point detection method demonstrated consistency and improved accuracy with larger sample sizes. Models with one change point accurately identified its location, while those with two change points showed significantly better detection for the first point than for the second. Regarding bias, the estimator for the last partition consistently exhibited the least precision across all cases; however, the bootstrap method effectively corrects it. The MSE is nearly zero in all scenarios, indicating that bootstrap not only reduces the bias of the MLE but also its variance. The CP for the first partition quickly reached the specified 0.95 level, while the bootstrap confidence intervals for the last partition exhibited slightly lower coverage compared to the asymptotic distribution. 

Overall, while the change-point detection procedure works, estimating breaks in the tail becomes harder as the number of partitions grows. The bootstrap method helps correct biases in smaller samples. Finally, the asymptotic distribution of the MLE is reliable for sample sizes greater than 250.

\subsection{Study 4}\label{sec:sim_4}

In our final simulation study, we examine the performance, in terms of size and power, of the four GOF tests introduced in Section \ref{sec:gof} for the discrete piecewise power-law model. In this case, all statistics were calculated using the \texttt{dgof} package in the \texttt{R} software. We use the same cases as in Study 3, generating 5,000 pseudo-random samples of varying sizes: $n = 100, 250, 500, 750$, and 1,000.

\paragraph{Size of the test.} We estimate the empirical sizes of all tests again at the 5\% and 10\% significance levels.  Table \ref{tab2} and \ref{tab7} in Appendix \ref{appendix:study4} shows the empirical sizes for all GOF tests. As can be seen, the $W^2$, $U^2$, and $A^2$ test statistics maintain the nominal significance level and correctly discriminate between samples for small and moderate sample sizes. In contrast, the KS test rejects approximately 1\% of power-law samples across cases, once again violating the assumption that $p$-values follow a uniform distribution, resulting in an unexpected rejection rate compared to the nominal significance level.

\paragraph{Power of the test.} At this point, we assess the power of the tests under various alternative distributions. In this case, we aim to evaluate the ability of each test to detect if an incorrect number of change points is specified for datasets generated under the same cases than Study 3. Since a piecewise model with one change point is a particular case of a model with two change points, we do not compute the power in cases 1 and 2 because the tests will never detect this situation. Instead, we focus on the cases 3 and 4, estimating a model with one change point and then applying the GOF tests. Table \ref{tab2} and \ref{tab7} presents the power of the tests in parentheses at a 5\% and 10\% significance levels, respectively. 

We observed that with sample sizes smaller than 250, most tests lack the capacity to detect that a model with one change-point is incorrect when data are simulated with two change points. However, the Watson test correctly rejects the one change-point model in approximately 30\% of cases. Moreover, its power increases significantly, achieving nearly one with a sample size of 500, indicating that this test consistently rejects the piecewise model when it is incorrect.

\begin{table}[!ht]
\centering
\caption{Empirical size and power (in parentheses) of the $K$, $W^2$, $U^2$, and $A^2$ statistics at a 5\% significance level for discrete piecewise power-law samples across different cases.}
\label{tab2}
\scalebox{0.83}{
\begin{tabular}{cccccccccccc}
  \hline
  & \multicolumn{4}{c}{Case 1} & & \multicolumn{4}{c}{Case 2} \\
  \cline{2-5} \cline{7-10}
  $n$ & $K$ & $W^2$ & $U^2$ & $A^2$ & & $K$ & $W^2$ & $U^2$ & $A^2$ \\
  \hline
   100 & 0.01 & 0.04 & 0.05 & 0.04 &  & 0.01 & 0.06 & 0.05 & 0.05 \\ 
   250 & 0.01 & 0.04 & 0.03 & 0.04 &  & 0.01 & 0.04 & 0.04 & 0.04 \\ 
   500 & 0.02 & 0.04 & 0.05 & 0.04 &  & 0.01 & 0.04 & 0.05 & 0.05 \\ 
   750 & 0.01 & 0.04 & 0.04 & 0.04 &  & 0.01 & 0.04 & 0.04 & 0.04 \\ 
   1,000 & 0.02 & 0.04 & 0.04 & 0.05 &  & 0.02 & 0.05 & 0.05 & 0.05 \\  
   \hline
   & \multicolumn{4}{c}{Case 3} & & \multicolumn{4}{c}{Case 4} \\
   \cline{2-5} \cline{7-10}
   $n$ & $K$ & $W^2$ & $U^2$ & $A^2$ & & $K$ & $W^2$ & $U^2$ & $A^2$ \\
   \hline
   100 & 0.01 (0.02) & 0.04 (0.02) & 0.05 (0.32) & 0.04 (0.01) &  & 0.02 (0.01) & 0.04 (0.01) & 0.05 (0.26) & 0.04 (0.01) \\ 
  250 & 0.01 (0.49) & 0.04 (0.39) & 0.04 (0.90) & 0.04 (0.45) &  & 0.01 (0.26) & 0.04 (0.23) & 0.04 (0.82) & 0.04 (0.29) \\ 
  500 & 0.02 (0.98) & 0.04 (0.96) & 0.04 (1.00) & 0.04 (0.98) &  & 0.02 (0.87) & 0.04 (0.88) & 0.04 (0.99) & 0.04 (0.94) \\ 
  750 & 0.02 (1.00) & 0.04 (1.00) & 0.04 (1.00) & 0.04 (1.00) &  & 0.01 (0.99) & 0.04 (0.99) & 0.05 (1.00) & 0.04 (1.00) \\ 
  1,000 & 0.02 (1.00) & 0.05 (1.00) & 0.04 (1.00) & 0.05 (1.00) &  & 0.02 (1.00) & 0.05 (1.00) & 0.04 (1.00) & 0.05 (1.00) \\ 
   \hline
\end{tabular}}
\end{table}

\paragraph{Conclusion.} The Watson test is the most powerful test among those that maintain the nominal significance level, effectively identifying situations where fewer change points are estimated than needed. This makes it highly recommended for practical situations.

\section{Applications}\label{sec:application}

In this section, we analyze two real datasets to demonstrate the potential of the proposed methodology for the power-law model and highlight the advantages of its piecewise extension.

\subsection{Index of complex networks}

In a recent study, \cite{broido2019scale} analyzed nearly 1,000 real-world networks from biological, informational, social, technological, and transportation domains obtained from the Index of Complex Networks (ICON) \citep{clauset2016colorado} to verify the scale-free hypothesis in each case. This corpus of networks includes both simple graphs and networks with various combinations of directed, weighted, bipartite, multigraph, temporal, and multiplex properties. They developed a set of graph simplification functions to convert a network into a set of simple graphs, and extract their degree sequences (see Supplementary Information in \citeauthor{broido2019scale}, \citeyear{broido2019scale} for details).

They applied the methodologies from \cite{clauset2009power} to analyze the degree sequences obtained. For a given degree distribution, they employed a KS statistic minimization to determine the minimum degree $x_{\mathrm{min}}$ above which the degrees are most closely modeled by a power-law distribution, thereby truncating non-power-law behavior among low-degree nodes. The scaling parameter was estimated using the ML estimator, and the statistical plausibility of the fitted model was assessed using the standard semiparametric bootstrap approach based on KS statistics. Finally, networks were categorized according to a taxonomy of different notions of evidence for scale-free structure outlined in Table \ref{tab3}.

\clearpage

\begin{table}[h]
    \centering
    \caption{Taxonomy of scale-free network definitions in \cite{broido2019scale}.}
    \label{tab3}
    \scalebox{0.93}{
    \begin{tabular}{rl}
        \hline
        Classification & Description \\
        \hline
        Super-Weak & For at least 50\% of graphs, no alternative distribution is favored over the power-law. \\
        Weakest & For at least 50\% of graphs, a power-law distribution cannot be rejected ($p \geq 0{.}1$). \\
        Weak & Requirements of Weakest, and the power-law region contains at least 50 nodes. \\
        Strong & Requirements of Weak and Super-Weak, and $2 < \alpha < 3$ for at least 50\% of graphs. \\
        Strongest & Requirements for Strong in at least 90\% of the graphs and for Super-Weak in at least 95\%\\
        Not Scale Free & Networks that are neither Super-Weak nor Weakest. \\
        \hline
    \end{tabular}
    }
\end{table}

The analysis reveals that only 4\% of networks strongly support scale-free structure and 29\% meet weaker criteria, while 46\% of networks indicate that the power law is simply a statistically better fit than alternatives, and 49\% show no evidence of scale-free structure. These findings challenge previous assumptions that power-law degree distributions are universal across various scientific domains \citep{barabasi2016network}. However, although this study provides a broad analysis, its results have certain drawbacks. Different parameter values result in varying classifications within the scale-free taxonomy, and the estimator itself is biased (as we discussed in Section \ref{sec:sim_1}). Moreover, the GOF test used in this analysis tends to reject the power-law model even when the samples are generated from the same model (see Section \ref{sec:sim_2}).

In order to address this issue, we apply our proposed methodology to the same degree sequences, using the MAP estimator with the Jeffreys prior for the discrete power-law model to estimate the scaling parameter, and then applying the Watson test to determine whether the degree sequences follow a power-law at a 10\% significance level, following the weakest taxonomy. The results are shown in Table \ref{tab4}.

\begin{table}[ht]
\centering
\caption{Number of degree sequences in each of the five domains provided by ICON for which the power law is valid, according to the weakest taxonomy under the four GOF statistical tests, at a 10\% significance level.}
\label{tab4}
\begin{tabular}{lccccc}
    \hline
    Domain & Number &$K$ & $W^2$ & $U^2$ & $A^2$ \\ 
    \hline
    Biological & 893 & 258 & 362 & 324 & 354 \\ 
    Informational &  26 &   8 &  12 &   9 &  12 \\ 
    Social & 626 & 289 & 437 & 343 & 429 \\ 
    Technological & 1,953 & 1,486 & 1,746 & 1,678 & 1,709 \\ 
    Transportation & 171 &   4 &   8 &   5 &  10 \\
    \hline
    Total & 3,669 & 2,045 & 2,565 & 2,359 & 2,514 \\ 
    \hline
\end{tabular}
\end{table}

We observe that across all domains, the power law is validated in more sequences using the alternative GOF tests than the KS test. Specifically, under the Watson test, more than 300 degree sequences adhere to the power-law distribution. Additionally, we plot the scaling parameters estimated by the MAP estimator in Figure \ref{fig5}, where we can observe that most of the degree sequences align with the scale-free hypothesis. However, there is a clear limitation: we cannot identify all the sequences corresponding to a specific network in every case from their GitHub repository. Consequently, we are limiting our analysis to the ungrouped sequences. Nonetheless, it is clear that these proposed methodologies improve the assessment across the networks considered, as they provide a more accurate and robust way to evaluate the presence of scale-free structures, minimizing false negatives seen in traditional KS tests.

\begin{figure}[H]
    \centering
    \includegraphics[scale=0.6]{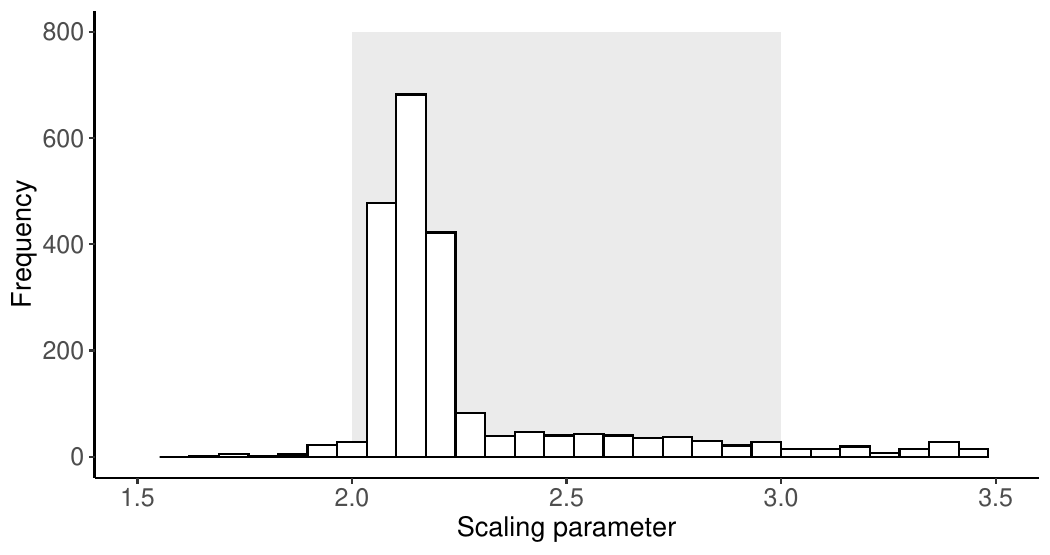}
    \caption{Histogram of the estimated scaling parameter, derived from equation \eqref{eqn:map1}, for power-law degree sequences under the Watson test, highlighting the range where the strong scale-free evidence taxonomy holds.}
    \label{fig5}
\end{figure}

In Figure \ref{fig6}, we plot the degree sequences based on their average degree (the mean number of connections per node in the sequence) and their size (the total number of nodes), emphasizing the points where the two tests under study do not yield the same results in determining whether a sequence adheres to the power law, i.e., with one test rejecting the hypothesis while the other accepts it.

\begin{figure}[H]
    \centering
    \includegraphics[scale=0.45]{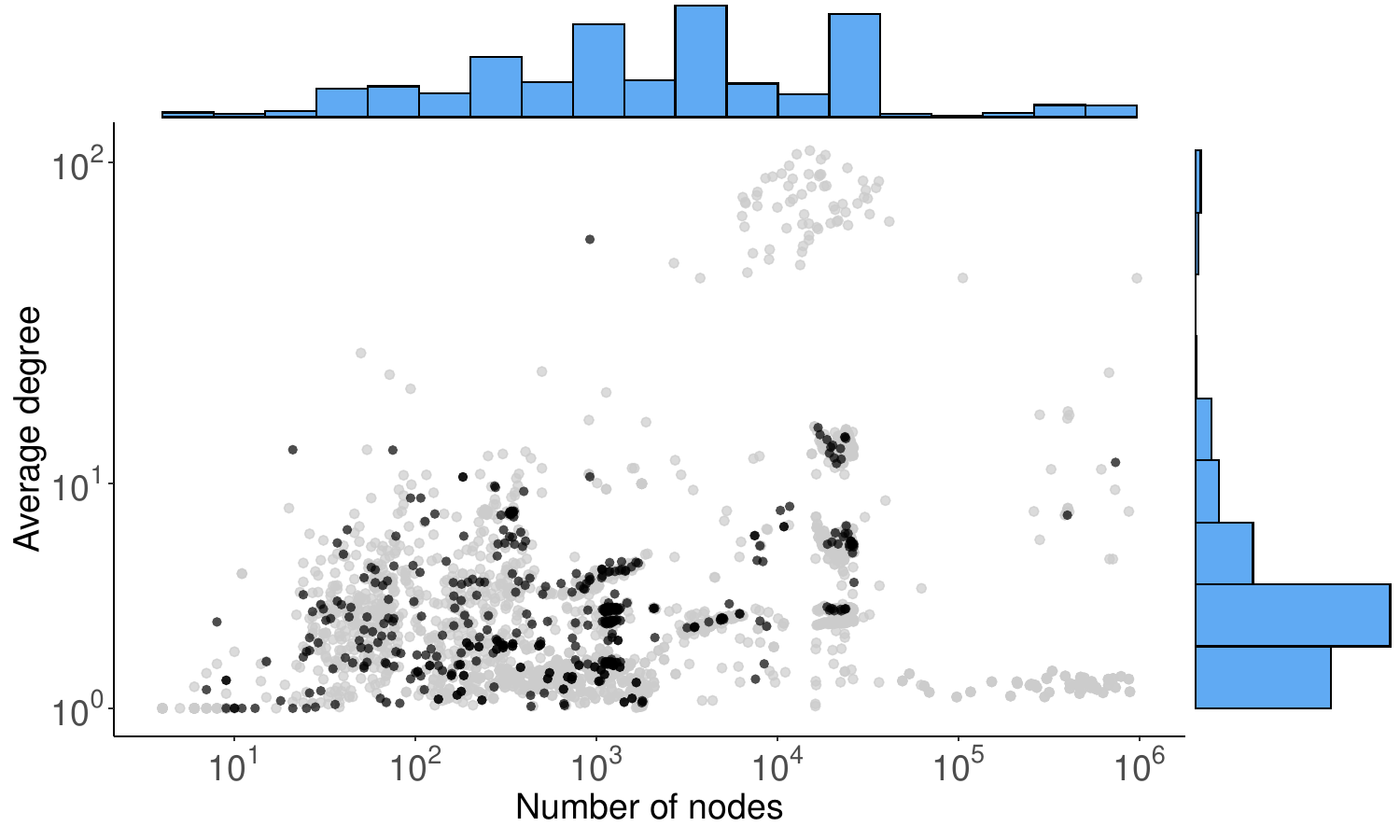}
    \caption{Average degree of a degree sequence as a function of the number of nodes, highlighting with dark points those sequences where the KS and Watson tests yield different results.}
    \label{fig6}
\end{figure}

As observed, most of the sequences yield consistent results across both tests. However, there are 494 sequences that exhibit different classifications. Specifically, 404 of them are considered to follow a power-law distribution by the Watson test but are not classified as such by the KS test. This situation coincides with our findings in Section \ref{sec:sim_2}, where we observed a greater tendency to reject power-law behavior under the KS test in datasets with small and moderate sample sizes.

In order to propose an alternative approach to analyze these networks, we will assess the performance of the piecewise power-law model by applying it to two degree sequences, selected randomly, from biological networks that have rejected the scale-free hypothesis providing insights into the applicability of semiparametric models in complex networks.

\paragraph*{Binary Interactomes}

The binary interaction network is part of a collection of networks that describe protein-protein interactions across 11 species  \citep{das2012hint}. The networks are undirected and unweighted, and were created by combining data from high-throughput experiments, with manually curated information from literature. We focus on a degree sequence for the species \textit{Mus musculus}. 

Using the proposed methodology to fit power-law models, we found a lower bound equal to 2, which discards around 36\% of the information, and an estimated scaling parameter of 2.51. However, the scale-free hypothesis is rejected in this sequence based on the KS test and the Watson test, with $p$-values given by $p(K) < 0.01$ and $p(U^2) = 0.07$. On the other hand, we fit the piecewise power-law model by introducing a change point until the model is no longer rejected by the Watson test. In this case, the model estimates a change point located at 2, segmenting the power-law behavior with a scaling parameter of 1.46 before this point and 2.51 after it (see panel (a) of Figure \ref{fig7} for details). Curiously, including the entire information provided in the degree sequence allows us that the scale-free hypothesis is in fact accepted, with $p(U^2)= 0.29$, because the piecewise model is an extension of the power law, which it begins to be plausible from the last change point, we have shown in Proposition \ref{prop3}. 

\paragraph*{Rat Brain} This network captures interactions between neurons of a rat brain traced through large-scale electron microscopy of serial thin sections, with nodes representing anatomical regions and edges indicating neural interactions \citep{bota2007online}. The data were collected by the Brain Architecture Management System project. We focus on the degree sequence derived from this network for the species \textit{Rattus norvegicus}. 

If we fit a power-law model, we obtain a lower bound of 6, discarding around 64\% of the information, and the estimated scaling parameter is 2.31. The scale-free hypothesis is again rejected by both tests ($p(K) < 0.01$, $p(U^2) = 0.08$). However, if we fit a piecewise model, we found that the model identifies two change points located at 2 and 6, segmenting the power-law behavior into scales of 1.31, 1.55, and 2.32 (see panel (b) of Figure \ref{fig7}). One more time, the inclusion of the entire information provided in the degree sequence allows us that the scale-free hypothesis is in fact accepted, with $p(U^2)= 0.61$.

This analysis demonstrates the advantages of the piecewise power-law model in capturing complex behaviors in degree sequences that initially reject the scale-free hypothesis when fitted with the standard power-law model. By introducing change points, the piecewise model provides a more flexible framework that better accommodates the underlying structure of the data. Moreover, in both cases, the selected piecewise model includes a change point immediately following the minimum value equal to 1, demonstrating its ability to explain values-inflated datasets. Such situations can be modeled by values-inflated models, which are statistical tools designed to account for an excessive frequency of certain values in a dataset, typically zeros or ones \citep{cheung2002zero}. These models provide a more accurate representation of the data by incorporating the excess mass at these specific values.

\begin{figure}[H]
    \centering
    \begin{subfigure}{0.45\textwidth}
        \centering
        \includegraphics[width=\linewidth]{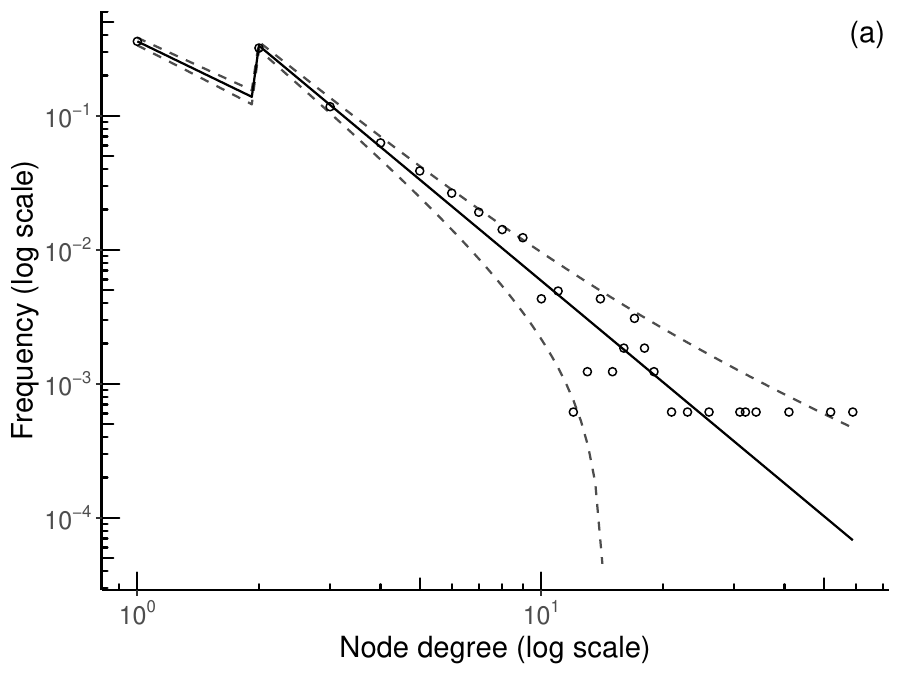}
    \end{subfigure}
    \hspace{1cm}
    \begin{subfigure}{0.45\textwidth}
        \centering
        \includegraphics[width=\linewidth]{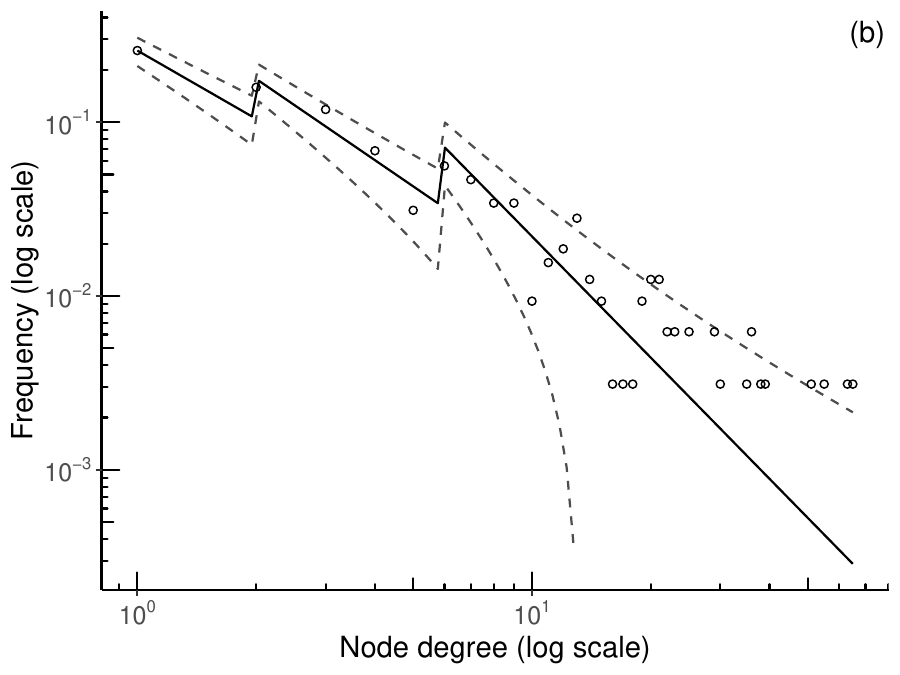}
    \end{subfigure}
    \caption{Degree distributions of two biological networks: (a) interactions between proteins; and (b) interactions between anatomical regions. The estimated probability mass function is shown as a solid line, with its 95\% confidence intervals (based on the normal distribution approximation for large samples) shown as a dashed line.}
    \label{fig7}
\end{figure}

\subsection{Word frequency in Moby Dick}

%Although the primary motivation for this work arises from studying the scale-free hypothesis in the degree distribution of complex networks, our approach is not limited to this application, and we address these issues within the broader context of power-law models.

The second dataset considered in this work consists of the word frequency distribution in the English text of Herman Melville's novel \textit{Moby Dick}, in order to demonstrate the potential of the semiparametric model beyond the domain of complex networks. This book, published in 1851, is a masterpiece of American literature, renowned for its richly descriptive language and thematic complexity. The novel's diverse vocabulary, ranging from nautical terms to profound philosophical reflections, offers an excellent dataset to study linguistic patterns. Figure \ref{fig8} presents the 150 most frequent words in this text, after a traditional text cleaning process that included removing punctuation and stopwords.

\begin{figure}[H]
    \centering
    \includegraphics[scale=0.6]{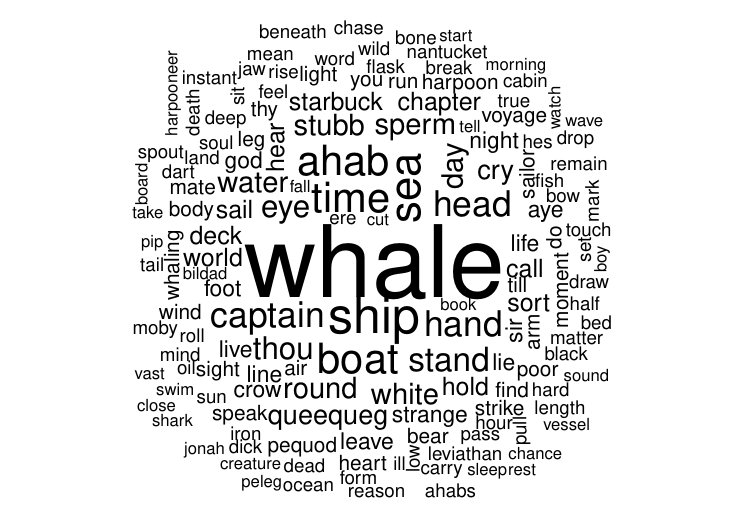}
    \caption{Word cloud of the 150 most frequent words in the novel Moby Dick.}
    \label{fig8}
\end{figure}

\cite{jean1916gammes} observed that the frequency of word usage in the novel appears to follow a power-law distribution, indicating that a few words dominate communication while many other words are used less frequently but contribute to making the language more expressive and adaptable. 

\cite{newman2005power} analyzed this dataset using the continuous version of the power-law model, fixing the lower bound to the minimum value of the data ($x_{\min} = 1$) and applying a Bayesian approach for the scaling parameter with a uniform prior, obtaining $\alpha = 2.20$ as the MAP estimate. In contrast, by applying the methodologies of \cite{clauset2009power}, which respect the discrete nature of the context, we find that the lower bound where the power-law becomes valid is $x_{\min} = 7$, with a scaling parameter of $\alpha = 1.95$. The $p$-values corresponding to the GOF tests evaluated in this work are $p(K) = 0.48$, $p(W^2) = 0.74$, $p(U^2) = 0.58$, and $p(A^2) = 0.73$, which do not provide evidence against this model.

However, when we apply the power-law model starting from $x_{\min}=7$, we are discarding 84\% of the information above this value, which is problematic because it leads an incomplete understanding of the data's overall structure. In order to model the entire dataset, we will use the piecewise power-law model. Since the exact number of change points is uncertain, we fit three piecewise models with one, two, and three change points, respectively, and evaluate their performance using the AIC and BIC. The results, including the parameter estimates and the locations of the change points, are presented in Table \ref{tab5}. According to the AIC, the best-fitting model is the piecewise model with three change points. However, to balance model fit and simplicity, we prioritize the BIC results, opting for a model with two change points to avoid unnecessary complexity and prevent overfitting.

\begin{table}[!ht]
\centering
\caption{Performance of the discrete piecewise power-law (PWPL) model  with one, two, and three change points. The table presents the estimated parameters for each segment, the change points, and the AIC and BIC values.}
\label{tab5}
\scalebox{1}{
\begin{tabular}{l|cccc|cccc|cc}
    \hline
    Model & $\alpha_1$ & $\alpha_2$ & $\alpha_3$ & $\alpha_4$ & $\tau_{(0)}$ & $\tau_{(1)}$ & $\tau_{(2)}$ & $\tau_{(3)}$ & AIC & BIC \\ 
    \hline
    PWPL$(k = 1)$ & $1.72$ & $1.89$ & - & - & 1 & 3 & - & - & 80,206.43 & 80,229.97 \\ 
    PWPL$(k = 2)$ & $1.70$ & $1.81$ & $1.94$ & - & 1 & 2 & 6 & - & 80,175.64 & 80,214.86 \\ 
    PWPL$(k = 3)$ & $1.70$ & $1.78$ & $1.85$ & $1.95$ & 1 & 2 & 3 & 7 & 80,168.09 & 80,223.00 \\
   \hline
\end{tabular}
}
\end{table}

Figure \ref{fig9} shows the rank-frequency plot, the fitted piecewise power-law model.

\begin{figure}[H]
    \centering
    \includegraphics[scale=0.6]{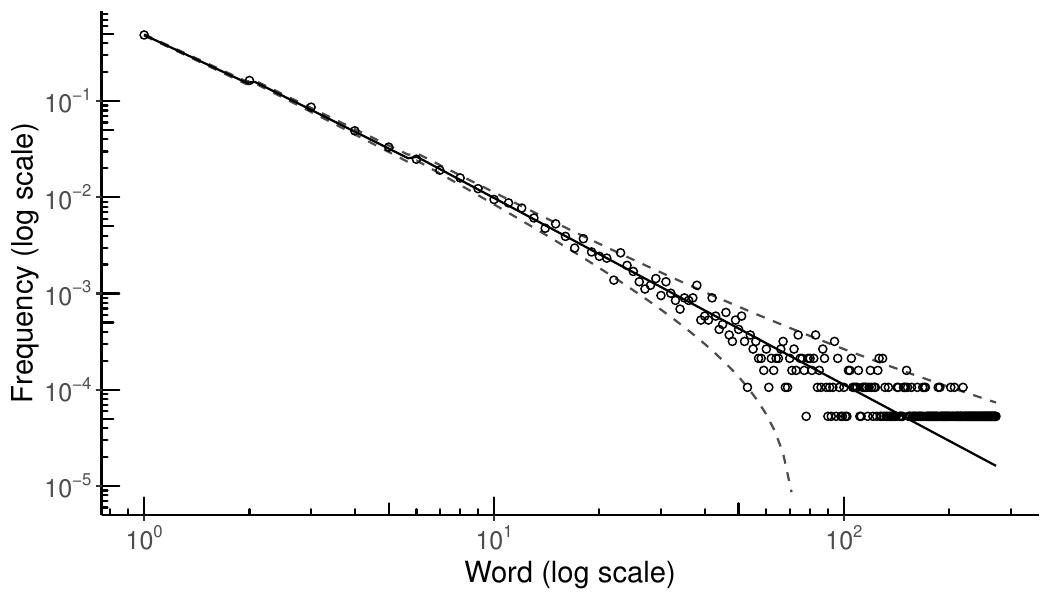}
    \caption{Rank/frequency plot of the numbers of occurrences of words in the novel Moby Dick, along with the probability mass function (solid line) and its 95\% confidence intervals (dashed line), of the fitted model.}\label{fig9}
\end{figure}

The results indicate that the proposed model nearly overlaps the points, generalizing the findings of the model without breaks, which basically corresponds to the last segment. In this case, the $p$-values are $p(K) = 0.49$, $p(W^2) = 0.48$, $p(U^2) = 0.32$, and $p(A^2) = 0.52$, none of which provide evidence against this new model. From this figure, it can be seen that the relationship between words and their frequency is not exactly linear (in the log-log scale). Instead of a single peak of high-frequency words, we might observe different zones where the power-law behavior fits with different parameters.

\section{Conclusion}\label{sec:conclusion}

In this paper, we have conducted a comprehensive analysis of two critical aspects of statistical modeling: parameter estimation and GOF evaluation on the discrete power-law model and its piecewise extension, which provides a more accurate representation of the underlying data distribution by accommodating potential changes in the scaling behavior across different segments. 

Firstly, we compare the performance of different estimators using both classical and Bayesian methods, assuming an improper Jeffreys prior obtained with the continuous and discrete versions of the power-law model. Through a simulation study, we found that frequentist methods for estimating the scaling parameter often produce biased estimates, particularly when the sample size is small. In contrast, the Bayesian estimator provides unbiased estimates, demonstrating superior accuracy in such scenarios. However, this improvement in estimation precision comes at the cost of increased variance. Despite this trade-off, the credible intervals generated by the Bayesian estimator are more reliable, as they tend to align closely with the specified nominal confidence level. This makes the Bayesian approach a more robust alternative compared to frequentist methods.

In principle, estimating the parameters of the model is not only part of the inferential process but also essential for validating the fitted distribution, i.e., determining whether empirical data can be described by a discrete power-law distribution. In the literature, this task is typically performed using the KS test with a semiparametric bootstrap to estimate its distribution and compute the $p$-value. Our findings suggest that this method does not provide reliable results. In general, the test tends to reject samples, even when they are generated from the same power-law model, leading to a high probability of committing a type I error. As a result, the test appears more powerful than it actually is. To address this issue, we evaluated three well-known GOF tests from the literature. Our results show that the Watson test stands out due to its superior power while successfully controlling the type I error rate. This finding is also reported in the literature with other distributions (see \citeauthor{pewsey2018parametric}, \citeyear{pewsey2018parametric}).

Through the analysis of degree sequences of nearly 1,000 complex networks from different domains, we applied our proposed methodology. Specifically, we used the Bayesian estimator to estimate the scaling parameter in each step of the search of the lower bound where the power law applies, and the Watson test to decide whether a sequence is well-explained by this model. Our main finding is that this new methodology does not reject the scale-free hypothesis in many sequences where the KS test does. The main discrepancies were found in small and moderate samples, as the simulation study indicates. Overall, our methodology can contribute to a more precise evaluation of the scale-free hypothesis, which is important because it helps in understanding the underlying structure of networks.

One limitation of the power-law model is that it typically starts from a minimum value, which does not necessarily coincide with the minimum of the sample. This results in a significant portion of the data being excluded from the estimation procedure, and as a result, the underlying mechanism of the data is not adequately captured. One way to address this issue is to assume that different power-law patterns exist within the data, instead of a single power-law. This novel piecewise model enhances flexibility and enables the modeling of the entire dataset, rather than just the tail; indeed, it can effectively recover dataset with values-inflated, making it highly useful and interpretable.

We assess the estimation of this model, finding that the MLEs produce unbiased estimates except for the last interval, which is expected since the last segment of the piecewise model recovers the standard power-law model. Given that the Fisher information has a complex formulation in the piecewise model, producing Bayesian estimators in this model and proving that the posterior is proper can be a very complex task. Instead, we proceed with a bootstrap scheme to estimate the bias and correct the estimators. The results show that the bias-corrected bootstrap MLE leads to unbiased estimations. In addition, the confidence intervals produced by the asymptotic distribution of the MLE and the bootstrap confidence intervals provide precise coverage probabilities for sample sizes of 250 or more, at the same time that we are estimating the change point through an exhaustive search.

We demonstrate the benefits of our approach by fitting the piecewise model to analyze three datasets: two degree sequences from biological complex networks where the scale-free hypothesis is rejected, and the word frequency distribution from the novel \textit{Moby Dick}, which is known to follow a power-law distribution. We found that all datasets can be effectively modeled in a flexible way with this model. Moreover, the piecewise model can accommodate the scale-free hypothesis in networks where the power-law model is rejected, opening a new avenue for discussing the limitations of assessing the scale-free hypothesis with the traditional power-law model. This motivates the development of semiparametric approaches for analyzing complex networks, offering a complementary perspective to \cite{broido2019scale} by applying our new methodologies to better understand the prevalence of scale-free networks in the real world. 

Additionally, these real-world examples highlight the critical need for using rigorous criteria to estimate the number of change points of a piecewise model. While the AIC often favors models with a greater number of change points due to its focus on improving model fit, this approach can lead to overfitting, where the model captures spurious variations rather than genuine structural changes. In contrast, the BIC introduces a more conservative penalty for complexity, promoting models that balance fit and interpretability. This balance highlights the importance of choosing criteria that keep the model simple, which is key to understanding important patterns in the data.

Future research could focus on expanding upon the findings of this study in several directions. Instead of modeling previous values of the lower bound with several power-law scales, it may be beneficial to explore nonparametric approaches, such as using splines or similar techniques. This enhancement could result in a more versatile model capable of explaining a wider range of datasets. Another important concern is the inclusion of the covariates, which could better capture the transitions between different power-law patterns, enabling more precise modeling and facilitating comparisons across individuals based on their factors or attributes. This is motivated because it allows us to understand how other factors influence the relationship between the response variable, providing a more adaptable to different contexts.

%\section*{Acknowledgements}

% The authors thank the associate editor and two anonymous referees for their constructive comments and suggestions, which led to improvements in a previous version of the manuscript

\clearpage
\section*{Disclosure statement}

No potential conflict of interest was reported by the authors.

\section*{Funding}

Nixon Jerez-Lillo was funded by the National Agency for Research and Development (ANID)/Scholarship Program/Doctorado Nacional/2021-21210981. Francisco A. Rodrigues is indebted to Brazilian National Council for Scientific and Technological Development (CNPq, Grant 308162/2023-4) and The State of São Paulo Research Foundation (FAPESP, Grants 20/09835-1 and 13/07375-0) for the financial support provided to this research. Paulo H. Ferreira acknowledges support from CNPq (Grant 307221/2022-9). 

\section*{Software}

The research was carried out using the computational resources of the Center for Mathematical Sciences Applied to Industry (CeMEAI) funded by FAPESP (Grant 2013/07375-0).

All the functions and procedures concerning implementation included in the article have been implemented in \cite{Rsoftware}. The code for the discrete piecewise power-law model was inspired by the implementation of the \texttt{PWEXP} package in \texttt{R} \citep{xu2024pwexp}. The codes will be available at \texttt{https://github.com/...}

\bibliographystyle{plainnat}
\bibliography{references}

\begin{thebibliography}{53}
\providecommand{\natexlab}[1]{#1}
\providecommand{\url}[1]{\texttt{#1}}
\expandafter\ifx\csname urlstyle\endcsname\relax
  \providecommand{\doi}[1]{doi: #1}\else
  \providecommand{\doi}{doi: \begingroup \urlstyle{rm}\Url}\fi

\bibitem[Albert et~al.(1999)Albert, Jeong, and
  Barab{\'a}si]{albert1999diameter}
R{\'e}ka Albert, Hawoong Jeong, and Albert-L{\'a}szl{\'o} Barab{\'a}si.
\newblock Diameter of the world-wide web.
\newblock \emph{Nature}, 401\penalty0 (6749):\penalty0 130--131, 1999.

\bibitem[Arnold and Emerson(2011)]{dgof}
Taylor~A. Arnold and John~W. Emerson.
\newblock Nonparametric goodness-of-fit tests for discrete null distributions.
\newblock \emph{The R Journal}, 3\penalty0 (2):\penalty0 34--39, 2011.
\newblock URL
  \url{http://journal.r-project.org/archive/2011-2/RJournal_2011-2_Arnold+Emerson.pdf}.

\bibitem[Barab{\'a}si(2009)]{barabasi2009scale}
Albert-L{\'a}szl{\'o} Barab{\'a}si.
\newblock Scale-free networks: a decade and beyond.
\newblock \emph{Science}, 325\penalty0 (5939):\penalty0 412--413, 2009.

\bibitem[Barab{\'a}si and Albert(1999)]{barabasi1999emergence}
Albert-L{\'a}szl{\'o} Barab{\'a}si and R{\'e}ka Albert.
\newblock Emergence of scaling in random networks.
\newblock \emph{Science}, 286\penalty0 (5439):\penalty0 509--512, 1999.

\bibitem[Barabási(2016)]{barabasi2016network}
Albert-László Barabási.
\newblock \emph{Network science}.
\newblock Cambridge University Press, Cambridge, 2016.

\bibitem[Bota and Swanson(2007)]{bota2007online}
Mihail Bota and Larry~W Swanson.
\newblock Online workbenches for neural network connections.
\newblock \emph{Journal of Comparative Neurology}, 500\penalty0 (5):\penalty0
  807--814, 2007.

\bibitem[Box and Tiao(2011)]{box2011bayesian}
George~EP Box and George~C Tiao.
\newblock \emph{Bayesian inference in statistical analysis}.
\newblock John Wiley \& Sons, Hoboken, NJ, 2011.

\bibitem[Broido and Clauset(2019)]{broido2019scale}
Anna~D Broido and Aaron Clauset.
\newblock Scale-free networks are rare.
\newblock \emph{Nature Communications}, 10\penalty0 (1):\penalty0 1017, 2019.

\bibitem[Casella and Berger(2021)]{casella2021statistical}
George Casella and Roger~L Berger.
\newblock \emph{Statistical inference}.
\newblock Cengage Learning, Belmont, CA, 2021.

\bibitem[Cheung(2002)]{cheung2002zero}
Yin~Bin Cheung.
\newblock Zero-inflated models for regression analysis of count data: a study
  of growth and development.
\newblock \emph{Statistics in medicine}, 21\penalty0 (10):\penalty0 1461--1469,
  2002.

\bibitem[Clauset et~al.(2007)Clauset, Young, and
  Gleditsch]{clauset2007resolution}
Aaron Clauset, Maxwell Young, and Kristian~Skrede Gleditsch.
\newblock Resolution.
\newblock \emph{Journal of Conflict Resolution}, 51\penalty0 (1):\penalty0
  58--87, 2007.

\bibitem[Clauset et~al.(2009)Clauset, Shalizi, and Newman]{clauset2009power}
Aaron Clauset, Cosma~Rohilla Shalizi, and Mark~EJ Newman.
\newblock Power-law distributions in empirical data.
\newblock \emph{SIAM Review}, 51\penalty0 (4):\penalty0 661--703, 2009.

\bibitem[Clauset et~al.(2016)Clauset, Tucker, and Sainz]{clauset2016colorado}
Aaron Clauset, Elizabeth Tucker, and Miguel Sainz.
\newblock The colorado index of complex networks, 2016.
\newblock URL \url{https://icon.colorado.edu/}.
\newblock Accessed: 2024-12-05.

\bibitem[Clauset et~al.(2020)Clauset, Shalizi, and Newman]{clauset_powerlaws}
Aaron Clauset, Cosma~R. Shalizi, and M.~E.~J. Newman.
\newblock Power-law distributions in empirical data, 2020.
\newblock URL \url{https://aaronclauset.github.io/powerlaws/}.
\newblock Accessed: 2024-12-05.

\bibitem[Cordeiro et~al.(2014)Cordeiro, Cribari-Neto,
  et~al.]{cordeiro2014introduction}
Gauss~M Cordeiro, Francisco Cribari-Neto, et~al.
\newblock \emph{An introduction to Bartlett correction and bias reduction}.
\newblock Springer, New York, 2014.

\bibitem[Costa et~al.(2007)Costa, Rodrigues, Travieso, and
  Villas~Boas]{costa2007characterization}
L~da~F Costa, Francisco~A Rodrigues, Gonzalo Travieso, and Paulino~Ribeiro
  Villas~Boas.
\newblock Characterization of complex networks: A survey of measurements.
\newblock \emph{Advances in Physics}, 56\penalty0 (1):\penalty0 167--242, 2007.

\bibitem[D'Agostino(2017)]{d2017goodness}
Ralph~B D'Agostino.
\newblock \emph{Goodness-of-fit-techniques}.
\newblock Routledge, New York, 2017.

\bibitem[Das and Yu(2012)]{das2012hint}
Jishnu Das and Haiyuan Yu.
\newblock Hint: High-quality protein interactomes and their applications in
  understanding human disease.
\newblock \emph{BMC systems biology}, 6:\penalty0 1--12, 2012.

\bibitem[Efron(1982)]{efron1982jackknife}
Bradley Efron.
\newblock \emph{The jackknife, the bootstrap and other resampling plans}.
\newblock SIAM, New York, 1982.

\bibitem[Ferrari and Cribari-Neto(1998)]{ferrari1998bootstrap}
Silvia~LP Ferrari and Francisco Cribari-Neto.
\newblock On bootstrap and analytical bias corrections.
\newblock \emph{Economics Letters}, 58\penalty0 (1):\penalty0 7--15, 1998.

\bibitem[Gamerman and Lopes(2006)]{gamerman2006markov}
Dani Gamerman and Hedibert~F Lopes.
\newblock \emph{Markov chain Monte Carlo: stochastic simulation for Bayesian
  inference}.
\newblock Chapman and Hall/CRC, Boca Raton, FL, 2006.

\bibitem[Gelman et~al.(2013)Gelman, Carlin, Stern, Dunson, Vehtari, and
  Rubin]{gelman2013bayesian}
A.~Gelman, J.B. Carlin, H.S. Stern, D.B. Dunson, A.~Vehtari, and D.B. Rubin.
\newblock \emph{Bayesian Data Analysis, Third Edition}.
\newblock Chapman \& Hall/CRC Texts in Statistical Science. Taylor \& Francis,
  Boca Raton, FL, 2013.
\newblock ISBN 9781439840955.

\bibitem[Gillespie(2015)]{powerRlaw}
Colin~S. Gillespie.
\newblock Fitting heavy tailed distributions: The {poweRlaw} package.
\newblock \emph{Journal of Statistical Software}, 64\penalty0 (2):\penalty0
  1--16, 2015.
\newblock URL \url{http://www.jstatsoft.org/v64/i02/}.

\bibitem[Handcock and Jones(2004)]{handcock2004likelihood}
Mark~S Handcock and James~Holland Jones.
\newblock Likelihood-based inference for stochastic models of sexual network
  formation.
\newblock \emph{Theoretical Population Biology}, 65\penalty0 (4):\penalty0
  413--422, 2004.

\bibitem[Jackson(2007)]{jackson2007electrodynamics}
J~David Jackson.
\newblock Electrodynamics.
\newblock \emph{The Optics Encyclopedia: Basic Foundations and Practical
  Applications}, 2007.

\bibitem[Jean-Baptiste(1916)]{jean1916gammes}
Estoup Jean-Baptiste.
\newblock Gammes st{\'e}nographiques.
\newblock \emph{Institut St{\'e}nographique de France, Paris}, 1916.

\bibitem[Jeffreys(1946)]{jeffreys1946invariant}
Harold Jeffreys.
\newblock An invariant form for the prior probability in estimation problems.
\newblock \emph{Proceedings of the Royal Society of London. Series A.
  Mathematical and Physical Sciences}, 186\penalty0 (1007):\penalty0 453--461,
  1946.

\bibitem[Kass and Wasserman(1996)]{kass1996selection}
Robert~E Kass and Larry Wasserman.
\newblock The selection of prior distributions by formal rules.
\newblock \emph{Journal of the American Statistical Association}, 91\penalty0
  (435):\penalty0 1343--1370, 1996.

\bibitem[Khanin and Wit(2006)]{khanin2006scale}
Raya Khanin and Ernst Wit.
\newblock How scale-free are biological networks.
\newblock \emph{Journal of Computational Biology}, 13\penalty0 (3):\penalty0
  810--818, 2006.

\bibitem[Lehmann(1999)]{lehmann1999elements}
Erich~Leo Lehmann.
\newblock \emph{Elements of large-sample theory}.
\newblock Springer, New York, 1999.

\bibitem[Mata(2020)]{mata2020complex}
Ang{\'e}lica Sousa~da Mata.
\newblock Complex networks: a mini-review.
\newblock \emph{Brazilian Journal of Physics}, 50\penalty0 (5):\penalty0
  658--672, 2020.

\bibitem[McCullagh(2002)]{mccullagh2002statistical}
Peter McCullagh.
\newblock What is a statistical model?
\newblock \emph{The Annals of Statistics}, 30\penalty0 (5):\penalty0
  1225--1310, 2002.

\bibitem[Morris et~al.(2019)Morris, White, and Crowther]{morris2019using}
Tim~P Morris, Ian~R White, and Michael~J Crowther.
\newblock Using simulation studies to evaluate statistical methods.
\newblock \emph{Statistics in Medicine}, 38\penalty0 (11):\penalty0 2074--2102,
  2019.

\bibitem[Moulton(1952)]{moulton1952h}
EJ~Moulton.
\newblock \emph{H. Goldstein, Classical mechanics}.
\newblock Pearson, Reading, MA, 1952.

\bibitem[Newman(2018)]{newman2018networks}
Mark Newman.
\newblock \emph{Networks}.
\newblock Oxford University Press, Oxford, UK, 2018.

\bibitem[Newman(2005)]{newman2005power}
Mark~EJ Newman.
\newblock Power laws, pareto distributions and zipf's law.
\newblock \emph{Contemporary Physics}, 46\penalty0 (5):\penalty0 323--351,
  2005.

\bibitem[Pewsey(2018)]{pewsey2018parametric}
Arthur Pewsey.
\newblock Parametric bootstrap edf-based goodness-of-fit testing for
  sinh--arcsinh distributions.
\newblock \emph{Test}, 27\penalty0 (1):\penalty0 147--172, 2018.

\bibitem[Pr{\v{z}}ulj(2007)]{prvzulj2007biological}
Nata{\v{s}}a Pr{\v{z}}ulj.
\newblock Biological network comparison using graphlet degree distribution.
\newblock \emph{Bioinformatics}, 23\penalty0 (2):\penalty0 e177--e183, 2007.

\bibitem[{R Core Team}(2020)]{Rsoftware}
{R Core Team}.
\newblock \emph{R: A Language and Environment for Statistical Computing}.
\newblock R Foundation for Statistical Computing, Vienna, Austria, 2020.

\bibitem[Ramaswamy(2010)]{ramaswamy2010mechanics}
Sriram Ramaswamy.
\newblock The mechanics and statistics of active matter.
\newblock \emph{Annu. Rev. Condens. Matter Phys.}, 1\penalty0 (1):\penalty0
  323--345, 2010.

\bibitem[Ramos et~al.(2021)Ramos, Costa, Louzada, and
  Rodrigues]{ramos2021power}
Pedro~L Ramos, Luciano da~F Costa, Francisco Louzada, and Francisco~A
  Rodrigues.
\newblock Power laws in the roman empire: a survival analysis.
\newblock \emph{Royal Society Open Science}, 8\penalty0 (7):\penalty0 210850,
  2021.

\bibitem[Ramos et~al.(2023)Ramos, Rodrigues, Ramos, Dey, and
  Louzada]{ramos2023power}
Pedro~L Ramos, Francisco~A Rodrigues, Eduardo Ramos, Dipak~K Dey, and Francisco
  Louzada.
\newblock Power laws distributions in objective priors.
\newblock \emph{Statistica Sinica}, 33\penalty0 (3):\penalty0 1--53, 2023.

\bibitem[Ramos et~al.(2024)Ramos, Jerez-Lillo, Segovia, Egbon, and
  Louzada]{ramos2024power}
Pedro~L Ramos, Nixon Jerez-Lillo, Francisco~A Segovia, Osafu~A Egbon, and
  Francisco Louzada.
\newblock Power-law distribution in pieces: a semi-parametric approach with
  change point detection.
\newblock \emph{Statistics and Computing}, 34\penalty0 (1):\penalty0 16, 2024.

\bibitem[Redner(1998)]{redner1998popular}
Sidney Redner.
\newblock How popular is your paper? an empirical study of the citation
  distribution.
\newblock \emph{The European Physical Journal B-Condensed Matter and Complex
  Systems}, 4\penalty0 (2):\penalty0 131--134, 1998.

\bibitem[Stanley(1971)]{Stanley1971phase}
H~Eugene Stanley.
\newblock \emph{Phase transitions and critical phenomena}, volume~7.
\newblock Clarendon Press, Oxford, New York, 1971.

\bibitem[Stauffer and Aharony(2018)]{Stauffer}
Dietrich Stauffer and Ammon Aharony.
\newblock \emph{Introduction to percolation theory}.
\newblock CRC press, London, 2018.

\bibitem[Stein and Shakarchi(2010)]{stein2010complex}
Elias~M Stein and Rami Shakarchi.
\newblock \emph{Complex analysis}, volume~2.
\newblock Princeton University Press, Princeton, 2010.

\bibitem[Thurner et~al.(2018)Thurner, Hanel, and Klimek]{Thurner}
Stefan Thurner, Rudolf Hanel, and Peter Klimek.
\newblock \emph{Introduction to the theory of complex systems}.
\newblock Oxford University Press, Oxford, UK, 2018.

\bibitem[Virkar and Clauset(2014)]{virkar2014power}
Yogesh Virkar and Aaron Clauset.
\newblock Power-law distributions in binned empirical data.
\newblock \emph{The Annals of Applied Statistics}, pages 89--119, 2014.

\bibitem[Wen et~al.(2009)Wen, Dromey, and Kirk]{wen2009software}
Lian Wen, R~Geoff Dromey, and Diana Kirk.
\newblock Software engineering and scale-free networks.
\newblock \emph{IEEE Transactions on Systems, Man, and Cybernetics, Part B
  (Cybernetics)}, 39\penalty0 (4):\penalty0 845--854, 2009.

\bibitem[Xu and Wen(2024)]{xu2024pwexp}
Tianchen Xu and Rachael Wen.
\newblock Pwexp: An r package using piecewise exponential model for study
  design and event/timeline prediction.
\newblock \emph{arXiv preprint arXiv:2404.17772}, 2024.

\bibitem[Zamanzade(2019)]{zamanzade2019edf}
Ehsan Zamanzade.
\newblock Edf-based tests of exponentiality in pair ranked set sampling.
\newblock \emph{Statistical Papers}, 60\penalty0 (6):\penalty0 2141--2159,
  2019.

\bibitem[Zhang(1992)]{zhang1992distributional}
Ping Zhang.
\newblock On the distributional properties of model selection criteria.
\newblock \emph{Journal of the American Statistical Association}, 87\penalty0
  (419):\penalty0 732--737, 1992.

\end{thebibliography}

\appendix

\section{Proof of Proposition \ref{prop1}}\label{appendix:proof1}

From complex analysis, we know that $\zeta(\alpha, x_{\min})$ is a meromorphic function with a simple pole at $\alpha = 1$ and no other poles for $\alpha > 1$. As shown in \cite{stein2010complex} (p. 74, Theorem 1.2), there exists an $\epsilon > 0$ and a non-vanishing analytic function $g(\alpha)$ on the interval $(1-\epsilon, 1+\epsilon)$ such that:
\begin{align*}
    \zeta(\alpha,x_{\min}) = (\alpha-1)^{-1} g(\alpha).
\end{align*}

%This behavior indicates the function's singularity structure.

Moreover, since $\alpha$ is a real number, the function $\zeta(\alpha, x_{\min})$ also takes real values, and therefore, $g(\alpha)$ is real for all $\alpha$ in the interval $(1-\epsilon, 1+\epsilon)$. Thus, it follows that:
\begin{equation*}
\lim_{\alpha\rightarrow 1^+} \frac{\zeta^{-1}(\alpha,x_{\min})}{\alpha^{-1}(\alpha-1)x_{\min}^{\alpha}}=\lim_{\alpha\rightarrow 1^+} \dfrac{[g(\alpha)]^{-1}}{x_{\min}^{\alpha} \alpha^{-1}}=x_{\min}^{-1}\,\left[g(1)\right]^{-1}<+\infty.
\end{equation*}

Besides, we have:
\begin{align*}
\log \zeta(\alpha,x_{\min})=-\log(\alpha-1)+\log g(\alpha),
\end{align*}
for all $\alpha \neq 1$ in the interval $(1-\epsilon, 1+\epsilon)$. We can differentiate twice to obtain:
\begin{gather*}
\phi^{(2)}(\alpha,x_{\min})=\frac{1}{(\alpha-1)^{2}}
+\frac{\partial^2 \log g(\alpha)}{ \partial \alpha^2},\\ 
\Rightarrow \lim_{\alpha\rightarrow 1^+}\frac{\phi^{(2)}(\alpha,x_{\min})}{\alpha^2(\alpha-1)^{-2}}=\lim_{\alpha\rightarrow 1^+}\left[\frac{1}{\alpha^2} + \frac{(\alpha-1)^{2}}{\alpha^2}\frac{\partial^2 \log g(\alpha)}{ \partial \alpha^2}\right]=1,
\end{gather*}
where the last equality follows directly from the fact that the second derivative of $\log g(\alpha)$ with respect to $\alpha$ is continuous in the interval $(1-\epsilon, 1+\epsilon)$, thereby permitting the algebra of limits. On the other hand, note that as $\alpha$ becomes large, the first term $x_{\min}^{-\alpha}$ dominates the series because the subsequent terms $(k + x_{\min})^{-\alpha}$ for $k \geq 1$ decay rapidly and become negligible. Therefore:
\begin{gather*}
\lim_{\alpha\rightarrow+\infty}\frac{\zeta(\alpha,x_{\min})}{x_{\min}^{-\alpha}}=1,
\end{gather*}
and then:
\begin{equation*}
\lim_{\alpha\rightarrow+\infty}\frac{\zeta^{-1}(\alpha,x_{\min})}{\alpha^{-1}(\alpha-1)x_{\min}^{\alpha}}=1.
\end{equation*}

Analogously, taking into account:
\begin{align*}
-\zeta'(\alpha, x_{\min}) &= \sum_{k=0}^{+\infty}\log(k+x_{\min})\left(k+ x_{\min}\right)^{-\alpha},\\ 
\zeta''(\alpha, x_{\min}) &=   \sum_{k=0}^{+\infty}\left[\log(k+x_{\min})\right]^2\left(k+ x_{\min}\right)^{-\alpha},
\end{align*}
we obtain:
\begin{align*}
\lim_{\alpha\rightarrow+\infty}\frac{\zeta'(\alpha,x_{\min})}{x_{\min}^{-\alpha}}=-\log(x_{\min}) \quad \mbox{ and } \quad \lim_{\alpha\rightarrow+\infty}\frac{\zeta''(\alpha,x_{\min})}{x_{\min}^{-\alpha}}=\left[\log(x_{\min})\right]^2.
\end{align*}

Hence, we conclude that:
\begin{align*}
\lim_{\alpha\rightarrow+\infty} \frac{\phi^{(2)}(\alpha,x_{\min})}{\alpha^2(\alpha-1)^{-2}}
&=\lim_{\alpha\rightarrow+\infty}\left[\left(\frac{\zeta''(\alpha,x_{\min})}{x_{\min}^{-\alpha}}\right)\left(\frac{\zeta(\alpha,x_{\min})}{x_{\min}^{-\alpha}}\right)^{-1}\dfrac{(\alpha-1)^2}{\alpha^2}\right.-\\
&\hspace{2cm}\left.\left(\frac{\zeta'(\alpha,x_{\min})}{x_{\min}^{-\alpha}}\right)^2\left(\frac{\zeta(\alpha,x_{\min})}{x_{\min}^{-\alpha}}\right)^{-2}\dfrac{(\alpha-1)^2}{\alpha^2}\right], \\
&=\left[\log(x_{\min})\right]^2-\left[\log(x_{\min})\right]^2, \\
&=0.
\end{align*}

Using Proposition \ref{prop0}, there exist constants $C_1 > 0$ and $C_2 > 0$ such that:
\begin{align*}
\zeta^{-1}(\alpha,x_{\min})\leq C_1\cdot \alpha^{-1}(\alpha-1)x_{\min}^{\alpha} \quad \mbox{ and } \quad \phi^{(2)}(\alpha,x_{\min})\leq C_2\cdot \alpha^2(\alpha-1)^{-2}.
\end{align*}

Based on these results, it follows that:
\begin{align*}
\int_{1}^{+\infty}\alpha^r\frac{\sqrt{\phi^{(2)}(\alpha,x_{\min})}}{\left[\zeta(\alpha,x_{\min})\right]^{n}}\prod_{i=1}^{n}x_{i}^{-\alpha} d\alpha &\leq C_1^n\sqrt{C_2}\int_{1}^{+\infty}\alpha^{r-(n-1)}(\alpha-1)^{n-1}\prod_{i=1}^{n}\left(\frac{x_{i}}{x_{\min}}\right)^{-\alpha}d\alpha,\\
&\leq C_1^n\sqrt{C_2}\int_{1}^{+\infty}\alpha^{r}\prod_{i=1}^{n}\left( \frac{x_i}{x_{\min}}\right)^{-\alpha} \, d\alpha,\\
&\leq C_1^n\sqrt{C_2} \int_{0}^{+\infty}\alpha^{r}\exp\left\{-\alpha \sum_{i = 1}^{n} \log\left(\frac{x_i}{x_{\min}}\right)\right\} \, d\alpha,\\
&= C_1^n\sqrt{C_2}\,\Gamma(r+1) \left(\sum_{i = 1}^{n} \log\left(\frac{x_i}{x_{\min}}\right)\right)^{-(r+1)},
\end{align*}
which proves that the posterior distribution is finite.

\section{Proof of Proposition \ref{prop2}}\label{appendix:proof2}

Using the identities proved in Proposition \ref{prop1} and following similar steps, we observe that:
\begin{align*}
\int_{1}^{+\infty}\alpha^r\frac{(\alpha-1)^{-1}}{\left[\zeta(\alpha,x_{\min})\right]^{n}}\prod_{i=1}^{n}x_{i}^{-\alpha} d\alpha
&\leq C_1^n\int_{1}^{+\infty}\alpha^{r-n}(\alpha-1)^{n-1}\prod_{i=1}^{n}\left(\frac{x_{i}}{x_{\min}}\right)^{-\alpha}d\alpha,\\
&\leq C_1^n\int_{1}^{+\infty}\alpha^{r-1}\exp\left\{-\alpha \sum_{i = 1}^{n} \log\left(\frac{x_i}{x_{\min}}\right)\right\}d\alpha,\\
&\leq C_1^n\int_{0}^{+\infty}\alpha^{r-1}\exp\left\{-\alpha \sum_{i = 1}^{n} \log\left(\frac{x_i}{x_{\min}}\right)\right\}d\alpha,\\
&=C_1^n\,\Gamma(r) \left(\sum_{i = 1}^{n} \log\left(\frac{x_i}{x_{\min}}\right)\right)^{-r},
\end{align*}
which proves that this new posterior distribution is also finite.

\section{Calculation of the derivatives of the Hurwitz zeta function}\label{appendix:derivatives_HZfunction}

Taking the first derivative with respect to $\alpha$, we get:
\begin{align*}
\zeta'(\alpha,x_{\min}) = \frac{\partial}{\partial \alpha} \left( \sum_{k=0}^{+\infty} (k + x_{\min})^{-\alpha} \right) = \frac{\partial}{\partial \alpha} \left( \sum_{k=x_{\min}}^{+\infty} k^{-\alpha} \right) = -\sum_{k=x_{\min}}^{+\infty} \frac{\log(k)}{k^\alpha}.
\end{align*}

Now, note that:
\begin{align*}
\zeta'(\alpha,x_{\min}) = -\sum_{k=1}^{+\infty} \frac{\log(k)}{k^\alpha} + \sum_{k=1}^{x_{\min}-1} \frac{\log(k)}{k^\alpha} = \zeta'(\alpha) + \sum_{k=1}^{x_{\min}-1} \frac{\log(k)}{k^\alpha}.
\end{align*}

For the second derivative, we differentiate $\zeta'(\alpha,x_{\min})$ again with respect to $\alpha$:
\begin{align*}
\zeta''(\alpha,x_{\min}) &= \frac{\partial}{\partial \alpha} \left( \zeta'(\alpha) + \sum_{k=1}^{x_{\min}-1} \frac{\log(k)}{k^\alpha} \right),\\
&= \zeta''(\alpha) + \sum_{k=1}^{x_{\min}-1} \frac{\partial}{\partial \alpha} \left( \frac{\log(k)}{k^\alpha} \right),\\
&= \zeta''(\alpha) - \sum_{k=1}^{x_{\min}-1} \frac{\left[\log(k)\right]^2}{k^\alpha}.
\end{align*}

This allows us to test the calculation of both derivatives

\section{Proof of Proposition \ref{prop3}}\label{appendix:proof3}

First of all, we must note that:
\begin{equation*}
\mathbb{P}(X \geq x)=\sum_{j=1}^{k + 1} \left[ \frac{\zeta(\alpha_j,x)}{\zeta(\alpha_{j},\tau_{(j-1)})} \cdot C_{j-1} \right] \mathbbm{1}_{\mathcal{R}_j}(x).
\end{equation*}

Let us obtain the probability mass function of $X \mid X \geq \tau_{(k)}$ by definition:
\begin{align*}
    p_{X \mid X \geq \tau_{(k)}}(x) &= \dfrac{p(x)}{\mathbbm{P}(X \geq \tau_{(k)})} \cdot \mathbbm{1}_{[\tau_{(k)}, +\infty]}(x), \\
    &= \left[\frac{x^{-\alpha_{k+1}}}{\zeta(\alpha_{k+1},\tau_{(k)})} \cdot C_{k} \right] \left[\frac{\zeta(\alpha_{k+1},\tau_{(k)})}{\zeta(\alpha_{k+1},\tau_{(k)})} \cdot C_{k} \right]^{-1},\\
    &= \frac{x^{-\alpha_{k+1}}}{\zeta(\alpha_{k+1},\tau_{(k)})},\quad x \geq \tau_{(k)},
\end{align*}
which indeed is a power-law distribution with specified parameters.

\section{Simulating samples}\label{sec:samples}

%Pseudo-random samples from a discrete piecewise power-law distribution are generated using the discrete inverse transform method, as detailed in the following algorithm.

\begin{algorithm}
\caption{Simulating samples from a discrete piecewise power-law distribution.}
\begin{algorithmic}
\State \textbf{Input:} $n$ (number of samples), $\boldsymbol\tau$ (change-points vector), $\boldsymbol{\theta}$ (scaling parameters)
\State \textbf{Initialize:} Compute the cumulative distribution function $F(x)$
\For{ $i = 1$ to $n$ }
    \State Generate $u_i \sim \text{Uniform}(0,1)$ %\Comment{Uniform random variable}
    \State Set $x_i = \inf \{x_j \in \{1, 2, \ldots\} : F(x_j) \geq u_i \}$ %\Comment{Inverse transform sampling}
\EndFor
\State \Return $\{x_1, x_2, \dots, x_n\}$
\end{algorithmic}
\end{algorithm}

\clearpage
\section{Study 2}\label{appendix:study2}

\begin{figure}[!ht]
    \centering
    \includegraphics[scale = 0.37]{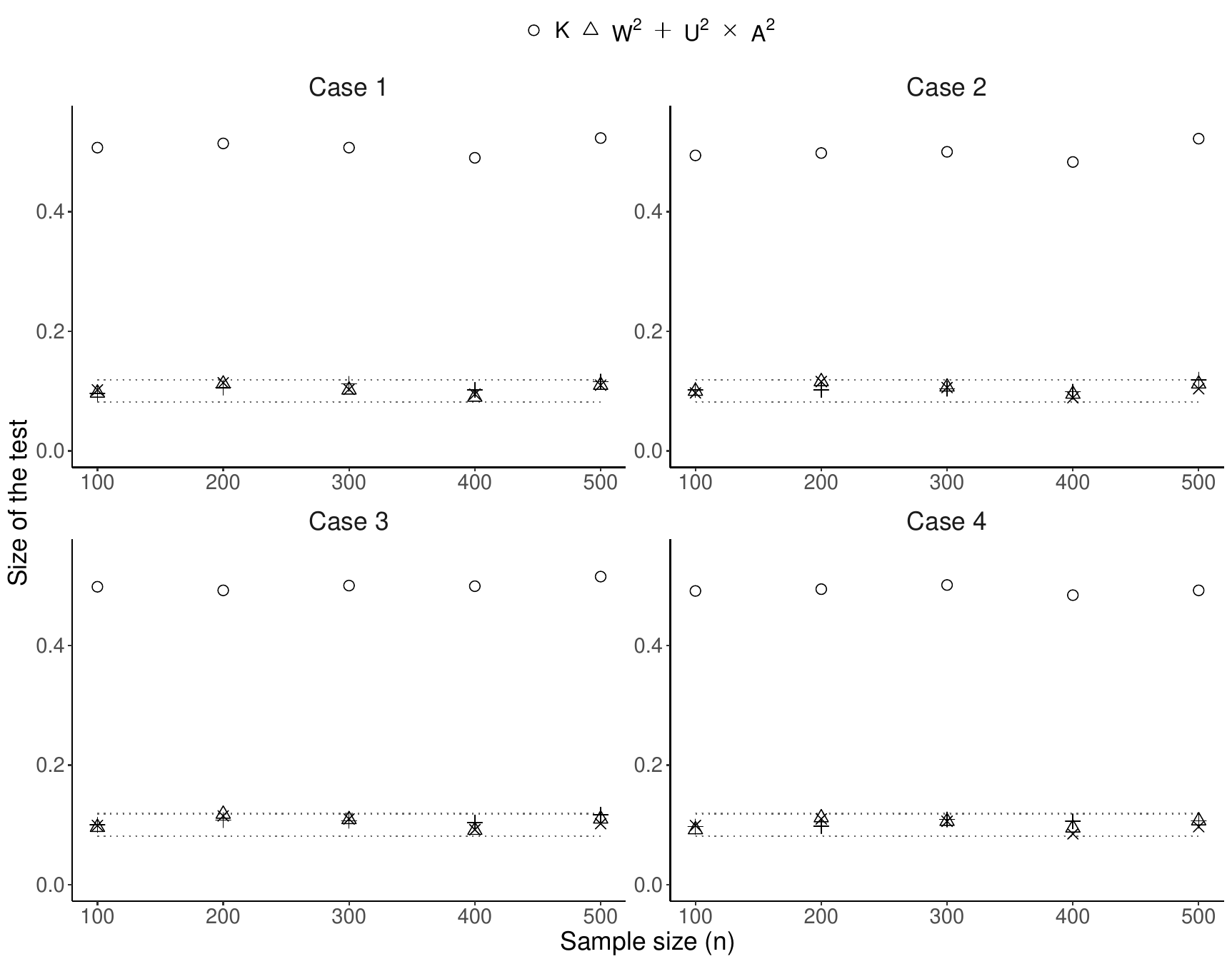}
    \caption{Empirical sizes of the GOF tests based on the $K$, $W^2$, $U^2$, and $A^2$ statistics at a 10\% significance level. The dashed line represents the limits of $0.1 \pm 1.96 \sqrt{0.1(1-0.1)/\text{1,000}}$, derived from the normal approximation.}\label{fig10}
\end{figure}

\begin{figure}[!ht]
    \centering
    \includegraphics[scale=0.37]{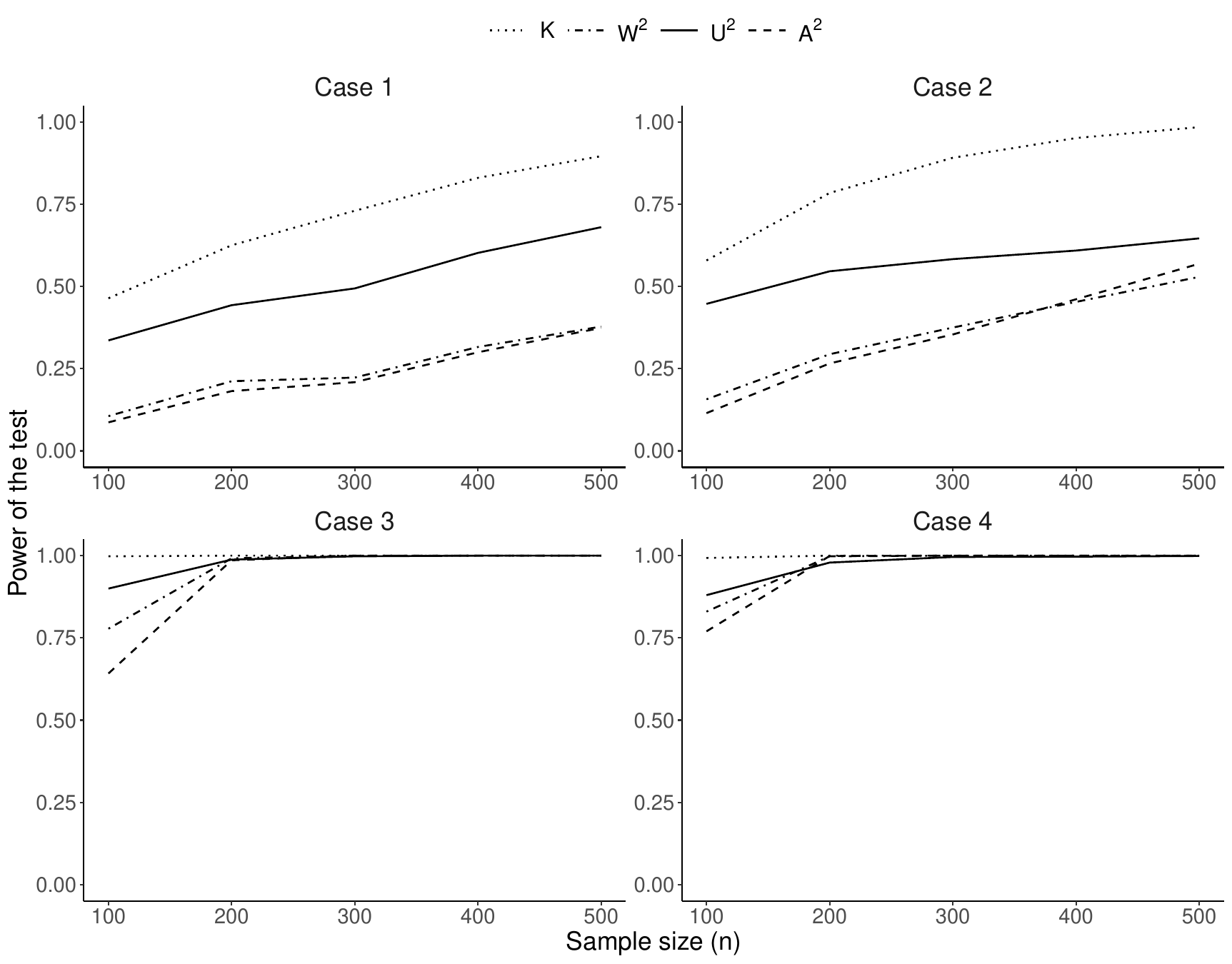}
    \caption{Empirical power of GOF tests based on the $K$, $W^2$, $U^2$, and $A^2$ statistics at a 10\% significance level.}\label{fig11}
\end{figure}

\clearpage
\section{Study 3}\label{appendix:study3}

\begin{table}[!ht]
\centering
\caption{Bias, MSE, and CP for the scaling parameter estimates of the discrete piecewise power-law model using the MLE and based on 1,000 simulated samples.}
\label{tab6}
\scalebox{0.8}{
\rotatebox{0}{
\begin{tabular}{ccccccccccccccc}
  \hline
  & & \multicolumn{3}{c}{$\hat\alpha_1$} & & \multicolumn{3}{c}{$\hat\alpha_2$} & & \multicolumn{3}{c}{$\hat\alpha_3$} \\
  \cline{3-5} \cline{7-9} \cline{11-13}
  & $n$ & Bias & MSE & CP & & Bias & MSE & CP & & Bias & MSE & CP \\ 
  \hline
  \multirow{5}{*}{Case 1:} & 100 & 0.00 & 0.00 & 0.94 &  & 0.06 & 0.14 & 0.95 & & - & - & - \\ 
   & 250 & 0.00 & 0.00 & 0.95 &  & 0.03 & 0.05 & 0.95 & & - & - & - \\ 
   & 500 & 0.00 & 0.00 & 0.96 &  & 0.01 & 0.03 & 0.95 & & - & - & - \\ 
   & 750 & -0.00 & 0.00 & 0.95 &  & 0.01 & 0.02 & 0.95 & & - & - & - \\ 
   & 1,000 & 0.00 & 0.00 & 0.95 &  & 0.00 & 0.01 & 0.95 & & - & - & - \\
   \hline
   \multirow{5}{*}{Case 2:} & 100 & 0.00 & 0.00 & 0.94 &  & 0.10 & 0.32 & 0.94 & & - & - & - \\ 
   & 250 & -0.00 & 0.00 & 0.96 &  & 0.02 & 0.11 & 0.94 & & - & - & - \\ 
   & 500 & 0.00 & 0.00 & 0.95 &  & 0.02 & 0.05 & 0.95 & & - & - & - \\ 
   & 750 & 0.00 & 0.00 & 0.95 &  & 0.02 & 0.03 & 0.95 & & - & - & - \\ 
   & 1,000 & 0.00 & 0.00 & 0.95 &  & 0.01 & 0.03 & 0.95 & & - & - & - \\ 
   \hline
   \multirow{5}{*}{Case 3:} & 100 & 0.00 & 0.00 & 0.95 &  & -0.01 & 0.04 & 0.90 &  & 0.22 & 0.44 & 0.92 \\ 
   & 250 & 0.00 & 0.00 & 0.96 &  & 0.00 & 0.01 & 0.94 &  & 0.06 & 0.10 & 0.94 \\ 
   & 500 & -0.00 & 0.00 & 0.95 &  & 0.00 & 0.01 & 0.95 &  & 0.02 & 0.04 & 0.94 \\ 
   & 750 & -0.00 & 0.00 & 0.95 &  & 0.00 & 0.00 & 0.96 &  & 0.01 & 0.03 & 0.96 \\ 
   & 1,000 & 0.00 & 0.00 & 0.96 &  & 0.00 & 0.00 & 0.95 &  & 0.01 & 0.02 & 0.94 \\ 
   \hline
   \multirow{5}{*}{Case 4:} & 100 & 0.00 & 0.00 & 0.93 &  & -0.03 & 0.05 & 0.92 &  & 0.21 & 0.65 & 0.91 \\ 
   & 250 & 0.00 & 0.00 & 0.94 &  & 0.01 & 0.02 & 0.94 &  & 0.09 & 0.18 & 0.94 \\ 
   & 500 & 0.00 & 0.00 & 0.96 &  & -0.00 & 0.01 & 0.94 &  & 0.03 & 0.07 & 0.95 \\ 
   & 750 & -0.00 & 0.00 & 0.95 &  & 0.00 & 0.00 & 0.95 &  & 0.02 & 0.05 & 0.94 \\ 
   & 1,000 & 0.00 & 0.00 & 0.94 &  & 0.00 & 0.00 & 0.95 &  & 0.00 & 0.03 & 0.95 \\ 
   \hline
\end{tabular}
}
}
\end{table}

\section{Study 4}\label{appendix:study4}

\begin{table}[!ht]
\centering
\caption{Empirical size and power (in parentheses) of the $K$, $W^2$, $U^2$, and $A^2$ statistics at a 10\% significance level for discrete piecewise power-law samples across different cases.}
\label{tab7}
\scalebox{0.8}{
\begin{tabular}{cccccccccccc}
  \hline
  & \multicolumn{4}{c}{Case 1} & & \multicolumn{4}{c}{Case 2} \\
  \cline{2-5} \cline{7-10}
  $n$ & $K$ & $W^2$ & $U^2$ & $A^2$ & & $K$ & $W^2$ & $U^2$ & $A^2$ \\
  \hline
   100 & 0.03 & 0.10 & 0.10 & 0.10 &  & 0.03 & 0.09 & 0.10 & 0.09 \\ 
   250 & 0.03 & 0.10 & 0.09 & 0.10 &  & 0.03 & 0.10 & 0.10 & 0.10 \\ 
   500 & 0.03 & 0.10 & 0.10 & 0.10 &  & 0.03 & 0.10 & 0.10 & 0.10 \\ 
   750 & 0.04 & 0.10 & 0.10 & 0.10 &  & 0.04 & 0.11 & 0.11 & 0.11 \\ 
   1,000 & 0.04 & 0.10 & 0.10 & 0.11 &  & 0.04 & 0.11 & 0.11 & 0.11 \\
   \hline
   & \multicolumn{4}{c}{Case 3} & & \multicolumn{4}{c}{Case 4} \\
   \cline{2-5} \cline{7-10}
   $n$ & $K$ & $W^2$ & $U^2$ & $A^2$ & & $K$ & $W^2$ & $U^2$ & $A^2$ \\
   \hline
    100 & 0.04 (0.07) & 0.09 (0.07) & 0.10 (0.49) & 0.09 (0.06) &  & 0.04 (0.04) & 0.09 (0.05) & 0.10 (0.42) & 0.09 (0.05) \\ 
    250 & 0.04 (0.70) & 0.10 (0.67) & 0.10 (0.96) & 0.10 (0.72) &  & 0.04 (0.47) & 0.10 (0.51) & 0.10 (0.92) & 0.10 (0.60) \\ 
     500 & 0.04 (1.00) & 0.10 (0.99) & 0.10 (1.00) & 0.10 (1.00) &  & 0.04 (0.96) & 0.10 (0.97) & 0.10 (1.00) & 0.10 (0.99) \\ 
     750 & 0.04 (1.00) & 0.11 (1.00) & 0.10 (1.00) & 0.10 (1.00) &  & 0.04 (1.00) & 0.11 (1.00) & 0.10 (1.00) & 0.11 (1.00) \\ 
     1,000 & 0.05 (1.00) & 0.11 (1.00) & 0.11 (1.00) & 0.11 (1.00) &  & 0.05 (1.00) & 0.11 (1.00) & 0.11 (1.00) & 0.11 (1.00) \\ 
   \hline
\end{tabular}}
\end{table}

\end{document}